\documentclass[a4paper,UKenglish,cleveref,autoref,thm-restate]{lipics-v2021}

\usepackage{algorithm}
\usepackage{algpseudocode}
\usepackage[table]{xcolor}
\usepackage{tikz}
\usepackage{pgfplots}
\usepackage{subcaption,ragged2e}
\usepackage{afterpage}
\usepackage{adjustbox}
\pgfplotsset{compat=1.18}

\newcommand{\ourname}{Proof-of-Theft}
\newcommand{\ournameshort}{PoT}
\newcommand{\ourfullname}{{\ournameshort} (\textit{{\ourname}})}

\pdfoutput=1
\hideLIPIcs

\bibliographystyle{plainurl}

\title{Dynamic Graph-based Fingerprinting of In-browser Cryptomining}

\titlerunning{Dynamic Graph-based Fingerprinting}

\author{Tanapoom {Sermchaiwong}}{The Hong Kong University of Science and Technology, Hong Kong, China}{tanapoom.se@gmail.com}{}{}

\author{Jiasi Shen}{The Hong Kong University of Science and Technology, Hong Kong, China}{sjs@cse.ust.hk}{}{}

\authorrunning{T. Sermchaiwong et al.}

\Copyright{Tanapoom Sermchaiwong and Jiasi Shen}

\begin{CCSXML}
<ccs2012>
   <concept>
       <concept_id>10011007.10010940.10010992.10010998.10011001</concept_id>
       <concept_desc>Software and its engineering~Dynamic analysis</concept_desc>
       <concept_significance>300</concept_significance>
       </concept>
   <concept>
       <concept_id>10002978.10003022.10003026</concept_id>
       <concept_desc>Security and privacy~Web application security</concept_desc>
       <concept_significance>500</concept_significance>
       </concept>
 </ccs2012>
\end{CCSXML}

\ccsdesc[300]{Software and its engineering~Dynamic analysis}
\ccsdesc[500]{Security and privacy~Web application security}

\keywords{software security, cryptocurrency, malware detection, dynamic analysis, data-flow graph}

\category{} 

\relatedversion{}

\acknowledgements{We would like to thank Sizhe Zhong and Jingyi Chen for their valuable feedback.}

\nolinenumbers 

\EventEditors{John Q. Open and Joan R. Access}
\EventNoEds{2}
\EventLongTitle{42nd Conference on Very Important Topics (CVIT 2016)}
\EventShortTitle{CVIT 2016}
\EventAcronym{CVIT}
\EventYear{2016}
\EventDate{December 24--27, 2016}
\EventLocation{Little Whinging, United Kingdom}
\EventLogo{}
\SeriesVolume{42}
\ArticleNo{23}

\begin{document}

\maketitle

\begin{abstract}
The decentralized and unregulated nature of cryptocurrencies, combined with their monetary value, has made them a vehicle for various illicit activities. One such activity is cryptojacking, an attack that uses stolen computing resources to mine cryptocurrencies without consent for profit. In-browser cryptojacking malware exploits high-performance web technologies like WebAssembly to mine cryptocurrencies directly within the browser without file downloads. Although existing methods for cryptomining detection report high accuracy and low overhead, they are often susceptible to various forms of obfuscation, and due to the limited variety of cryptomining scripts in the wild, standard code obfuscation methods present a natural and appealing solution to avoid detection. To address these limitations, we propose using instruction-level data-flow graphs to detect cryptomining behavior. Data-flow graphs offer detailed structural insights into a program's computations, making them suitable for characterizing proof-of-work algorithms, but they can be difficult to analyze due to their large size and susceptibility to noise and fragmentation under obfuscation. We present two techniques to simplify and compare data-flow graphs: (1) a graph simplification algorithm to reduce the computational burden of processing large and granular data-flow graphs while preserving local substructures; and (2) a subgraph similarity measure, the \textit{n-fragment inclusion score}, based on fragment inclusion that is robust against noise and obfuscation. Using data-flow graphs as computation fingerprints, our detection framework {\ourfullname} was able to achieve high detection accuracy against standard obfuscations, outperforming existing detection methods. Moreover, {\ournameshort} uses generic data-flow properties that can be applied to other platforms more susceptible to cryptojacking such as servers and data centers.
\end{abstract}

\section{Introduction}

A cryptocurrency is a decentralized peer-to-peer digital exchange system that functions as a currency without a central authority \cite{7906988}. The absence of a central governing body and the promise of freedom and resistance to censorship have led to widespread initial adoption of cryptocurrencies. Combined with the speculative nature of the market, this has created an enormous growth in demand \cite{chaincatcherDataAdoption}. The staggering increase in the monetary value of many cryptocurrencies in recent years has made cryptomining a potential source of revenue derived from computing power \cite{farell2015analysis}. In some cases, stolen computing resources are used to generate profits through unauthorized mining. The use of unauthorized computational power to mine cryptocurrency without consent is called \textit{cryptojacking}, and this attack happens on all scales of computing systems, from large data centers to small personal platforms such as web browsers and IOT devices \cite{arunkumar2024review}, \cite{tekiner2022lightweight}. As recently as 2024, instances of cryptojacking attacks have caused significant monetary damages to organizations such as the United States government \cite{fedscoopEvenGovernment} and many financial institutions \cite{coindeskCryptojackingFinancial}.

In-browser cryptojacking has been described as a recently emerged type of fileless malware that is difficult to detect in the traditional framework of malware detection \cite{carlin2019you}. Recent studies have demonstrated the prevalence of in-browser cryptojacking on popular websites based on their analyses of top-ranking websites \cite{eskandari2018first}, \cite{musch2019thieves}, \cite{tekiner2021browser}, \cite{9092245}. In-browser cryptojacking have been enabled partially by new technologies such as WebAssembly (Wasm) \cite{webassemblyWebAssembly} and WebWorkers \cite{mozillaUsingWorkers}, which were introduced to facilitate high-performance applications to run on web browsers, delivering cryptominers filelessly through scripts on a webpage. In fact, previous works showed that most of these miners are implemented in Wasm \cite{10.1145/3243734.3243858}. Popular services such as CoinHive facilitated this process by providing mining scripts and mining pools as an alternative method of revenue generation for websites, and although the CoinHive service was shut down in 2019, multiple studies indicate that in-browser cryptojacking is still prevalent in the wild \cite{eskandari2018first}, \cite{musch2019thieves}, \cite{tekiner2021browser}.

To prevent theft of computing resources, various techniques have been proposed to detect and prevent cryptominers from running in browsers. These include traditional methods such as domain name blocking and keyword blacklists \cite{githubGitHubHoshsadiqadblocknocoinlist}, \cite{githubGitHubKerafNoCoin}, \cite{githubGitHubGorhilluBlock}, as well as those involving more advanced analysis such as semantic instruction counting \cite{wang2018seismic}, \cite{bian2020minethrottle}, \cite{8514167}, CPU, memory, and network traffic monitoring \cite{rodriguez2018rapid}, \cite{kelton2020browser}, \cite{kharraz2019outguard}, and static approaches based on machine learning and deep learning \cite{naseem2021minos}, \cite{romano2020wasim}. Although existing detection methods report high accuracy, they are often susceptible to obfuscation \cite{10.1145/3507657.3528560}, \cite{harnes2024cryptic}, \cite{wang2018seismic}. For example, proxies, dynamically generated domain names, and encrypted WebSocket communication can render blacklists and network-based detection methods ineffective. Statistical distributions of the instruction count can be skewed by performance throttling and the insertion of spurious instructions. CPU and memory event monitoring are susceptible to noise from other processes or web pages \cite{9306696}. Deep learning-based approaches such as those proposed by MINOS \cite{naseem2021minos} and WASim \cite{romano2020wasim} have been shown to perform poorly on obfuscated and diversified binaries \cite{harnes2024cryptic}, \cite{10.1145/3507657.3528560}. These obfuscations are easy to apply with access to the source code, thereby restricting the usefulness of prior approaches due to how straightforward they are to bypass. There remains a large gap in detecting obfuscated miners effectively.

Cryptomining in a single browser is often too slow to generate profit since the probability of calculating the correct hash in a reasonable amount of time is minuscule, and the solution is usually to mine cryptocurrency as a part of a larger pool, where profit from any correctly mined hash is shared among the participants. This suggests that in-browser cryptojacking is only possible on economies of scale, and numerous studies support this hypothesis, indicating that a large majority of in-browser cryptomining scripts originate from a limited number of services (e.g., CoinHive, CoinImp, JSECoin) which provide the necessary infrastructure such as the mining scripts and pool \cite{tekiner2021browser}, \cite{9092245}, \cite{kharraz2019outguard}. The small diversity of cryptominers deployed on a large number of platforms makes obfuscation an attractive solution to evade detection. Therefore, there is a need for better detection methods of obfuscated cryptomining malware.

In this paper, we propose {\ourfullname}, a new approach to detect obfuscated cryptomining malware based on the following key insights. Fundamentally, a cryptominer performs calculations to validate transactions on the blockchain. In most proof-of-work schemes, this entails repeated hashing of a block of data to generate a hash satisfying an arbitrary but difficult property \cite{7906988}. This repetitive computation is an intrinsic property of a proof-of-work scheme. To characterize the computations performed by an algorithm, instruction-level data-flow graphs provide a structured and comprehensive view of computation that is difficult to manipulate. We hypothesize that the data-flow graphs provide us with a direct view of a cryptominer's core characteristic. Moreover, code obfuscators are known to operate within certain boundaries \cite{272274}, suggesting that there are a limited number of transformations they can make to the data-flow properties of a program.

Instruction-level data-flow graphs are difficult to compare due to their sizes and susceptibility to noise and fragmentation. To enable their use in cryptomining detection, we propose a set of graph analysis techniques consisting of: (1) a graph simplification algorithm to generate computation fingerprints from data-flow graphs; and (2) a subgraph similarity measure to search for malicious behavior in fingerprints. We demonstrate that {\ournameshort} outperforms the state-of-the-art in cryptominer detection under various obfuscations. To the best of our knowledge, this study is the first to utilize instruction-level data-flow graphs in either detecting cryptominers or software classification in general, and while we implement our algorithms and perform the experiments on the WebAssembly platform for web browsers, our theoretical framework for data-flow graph analysis and computation detection uses generic data-flow properties that can be applied to other platforms more susceptible to cryptojacking such as servers and data centers.

In summary, this paper makes the following three major contributions:
\begin{itemize}
    \item We present a novel algorithm to simplify large repetitive data-flow graphs that preserves local substructures, enabling large and granular data-flow graphs to be used in cryptomining detection.
    \item We introduce a new subgraph similarity measure, \textit{n-fragment inclusion score}, to compare graphs based on inclusion that is resilient against noise, fragmentation, and obfuscation.
    \item We implement and evaluate {\ournameshort} for detecting WebAssembly cryptominers on a sample of 29 real-world web applications, 6 cryptominers, and 30 obfuscated cryptominers, showing the effectiveness of instruction-level data-flow graphs in obfuscation resistant detection of cryptojacking.
\end{itemize}

\section{Background and Motivation}

In this section, we present key concepts that are essential for the remainder of the paper, as well as the motivation behind our methodology. First, we provide an overview of proof-of-work cryptocurrencies and how they motivate our analysis. We give a brief review of the current literature on cryptominer detection and the role of resource graphs in malware detection to demonstrate the need for a better detection framework. Finally, we explain key details of the WebAssembly binary format and the data-flow graphs we collect using dynamic analysis.

\subsection{Cryptomining and Proof-of-Work Schemes}

To ensure the validity of transactions and prevent malicious agents from compromising the integrity of the blockchain, a consensus mechanism is employed by cryptocurrency systems to validate the authenticity of new transactions \cite{7906988}. This consensus mechanism involves multiple users validating the ownership and transfer of currency, ensuring that one may only spend the currency in their possession. The ownership information necessary for validation exists transparently on the public ledger in the blockchain. To prevent a single actor from creating a false consensus by using multiple nodes to validate an invalid transaction, a validator (miner) is required to provide proof that they possess a certain amount of resource which acts as a form of artificial cost or barrier to entry. The three most ubiquitous proofs are as follows.
\begin{itemize}
    \item \textit{Proof of Work.} The miners are asked to perform a resource intensive computation that is easy to verify. This proof presents a barrier of entry by requiring each validation to be backed by a certain amount of computing resources.
    \item \textit{Proof of Stakes.} The validators are chosen in proportion to the amount of currency they hold.
    \item \textit{Proof of Retrievability.} The validators must prove that they can store a large piece of data intact and able to retrieve it at will. The proof creates a barrier of entry by requiring a large storage capacity.
\end{itemize}
To incentivize users and stakeholders to participate in securing the blockchain, miners are given a reward for each block they successfully validate. In proof-of-work schemes, this is commonly referred to as cryptomining, and the compensation for successful block validations generates revenue for the miner.

A recent study shows that proof-of-work cryptocurrencies dominate over 57\% of the market share \cite{BAJRA2024102571}. Our study focuses on such schemes, where computing resources can be used to generate revenue, as is central to cryptojacking. The computationally intensive task required to verify a block usually involves calculating a cryptographic hash function on the transaction data combined with a randomized value until a hash with an arbitrary but difficult property is found. For example, the Bitcoin \cite{nakamoto2008bitcoin} currency requires miners to compute a \verb|sha2| hash that is numerically smaller than the network's difficulty target \cite{antonopoulos2014mastering}.

Several proof-of-work algorithms have been proposed to solve different challenges faced by decentralized currencies. A major issue faced by early cryptocurrencies, such as Bitcoin, is that the hash function used to validate blocks favors performance on application-specific integrated circuits (ASICs) and GPUs, leading to the consolidation of mining power in large-scale farms and discouraging average users from mining due to the inefficiency of using a consumer CPU, possibly threatening the decentralized nature of the blockchain network \cite{8516911}. Later cryptocurrencies adopt hashing functions, for instance yescrypt \cite{openwallYescryptScalable} and CryptoNight \cite{getmoneroCryptoNightMonero}, with ASIC and GPU resistant properties such as memory-hardness, where a memory bottleneck diminishes the computational efficiency of ASICs and GPUs. Cryptocurrencies designed for consumer CPUs such as Monero \cite{getmoneroMoneroProject}, MintMe \cite{mintmeMintMeCreate}, and other CryptoNight coins have been the common currencies used for in-browser cryptojacking \cite{tekiner2021browser}, \cite{coinimpCoinIMPJavaScript}, as they are the most efficient currencies to mine in Wasm.

A fundamental characteristic of most proof of work algorithms used in cryptocurrency systems is an extensive amount of repetitive and artificial computation. For example, the Bitcoin mining algorithm is a simple search problem where the random nonce represents the search space, and each search operation requires a nontrivial computation of the \verb|sha2| hash function. In a data-flow graph where each execution of an instruction represents a unique vertex, redundant computations emerge as repeated subgraphs representing the same computation performed many times. This key insight allows us to compress the graph into a much smaller form while preserving local semantics of a program.

\subsection{Cryptojacking Detection}

Numerous studies have proposed highly accurate detection systems to address cryptojacking malware. These systems utilize one or more of the following program features: network behavior, resource and performance metrics, semantic instruction count, and the program binary file. We discuss these systems in the following section.

\noindent\\
\textbf{Network Behavioral Detection.} Several network detection methods have been proposed for large-scale and platform-independent detection of cryptojacking. Caprolu et al.~\cite{CAPROLU2021126} and Pastor et al. \cite{9178288} proposed the use of network flow features (i.e. packet sizes and inter-arrival time) to classify and detect cryptominers which communicate with a mining pool. MineCap \cite{minecap} presented a similar idea using super-incremental learning to reduce the training burden. XMR-Ray \cite{xmrray} proposed more refined network-flow features specific to the Stratum pool mining protocol, and employed one-class classification techniques to allow the model to be trained using only mining traffic.

While these systems report good detection rates and scalability, they are limited to the specific mining pool protocol on which they were trained. The authors of XMR-Ray noted that hand-crafted obfuscations and deviations from the expected protocol can affect the detection rates of these methods. Upgrading or deviating from a mining pool protocol is a much easier task than changing the underlying cryptomining algorithm, which requires a revision to the blockchain protocol, and hence network-based detection methods require less effort to evade.

\noindent\\
\textbf{Resource and Performance Metrics.} Researchers have proposed cryptomining detection based on CPU, memory, and other resource consumption metrics. These methods employ side channels to detect the secondary effects of proof-of-work computations. Wu et al. \cite{app12199838} and Gomes and Correia \cite{9306696} proposed machine learning classification of cryptominers based on CPU usage metrics. DeCrypto Pro \cite{9196224} introduced a more comprehensive set of performance counters for classification by reading processor, memory, and disk metrics. Outguard \cite{10.1145/3308558.3313665} and CoinSpy \cite{kelton2020browser} included network features and information from the JavaScript engine, such as execution time, compilation time, and garbage collection statistics, in addition to raw performance data. Gangwal and Conti \cite{8854845} proposed a magnetic side-channel by profiling magnetic field emission of a processor during cryptomining.

The authors of these studies noted a few drawbacks of using performance metrics and side channels to detect cryptojacking. Namely, these systems are sensitive to noise resulting from external processes running concurrently. They also require administrator privileges to monitor performance counters and system-level events. In addition, miners can restrict their behavior by throttling or performing arbitrary tasks concurrently to manipulate performance metrics. Coinspy \cite{kelton2020browser} and Outguard \cite{10.1145/3308558.3313665} attempt to alleviate these drawbacks by incorporating both network and performance metrics.

\noindent\\
\textbf{Semantic Instruction Count.} The results of Seismic \cite{wang2018seismic} indicate that cryptomining behavior and proof-of-work algorithms can be differentiated at the semantic instruction level. Their findings highlight the significance of a few binary instructions such as \verb|and|, \verb|xor|, and \verb|shr| in cryptominers. Based on this observation, they proposed a detection method based on the statistical distribution of instructions in dynamically collected execution traces. Carlin et al. \cite{8514167} performed a similar analysis by training machine learning models on opcode distribution. MineSweeper \cite{10.1145/3243734.3243858} hand-crafted algorithm-specific signatures using instruction count analysis. MineThrottle \cite{bian2020minethrottle} refined this approach by profiling only a small number of frequently executed blocks of code and checking the distribution of instructions within these blocks.

The detection of cryptominers based on instruction counts can be easily circumvented by inserting spurious operations to skew the distribution of instructions. While MineThrottle \cite{bian2020minethrottle} tries to address this issue by profiling only frequently executed code blocks, blocks can be duplicated to hide their true frequency or divided into smaller blocks to dilute the incriminating instructions. The method we propose in this paper is built on the foundation of semantic instruction distribution, but we also incorporate the structure of data-flow to improve the robustness of our detection method.

\noindent\\
\textbf{Binary File Analysis. } Romano and Wang \cite{romano2020wasim} proposed WASim, a classification method of WebAssembly binaries using features extracted from Wasm binary files such as function sizes, export types, file attributes, and other metadata. The features are used to train several machine learning models for classification. The authors of MINOS \cite{naseem2021minos} discovered that WebAssembly cryptojacking binaries often look similar when represented directly as grayscale images and proposed a convolutional neural network classifier on image representations of the binaries. Although they report high detection rates, this method is not robust because the WebAssembly binary format contains sections with debugging information which can grow arbitrarily large, allowing the image data to be modified arbitrarily. Cabrera-Arteaga et al. \cite{CABRERAARTEAGA2023103296} and Harnes and Morrison \cite{harnes2024cryptic} demonstrated that MINOS is also vulnerable to multiple forms of obfuscation, which is supported by our experimental results.

\noindent\\
The body of prior work indicates that there is yet room for improvement in cryptojacking malware detection that is resilient against obfuscation and evasion.

\subsection{Resource Graphs in Malware Detection}

Resource graphs provide rich information on the behavior of a program, and as such, a large body of work exists on the use of graphs in detecting malicious software. Hu et al. \cite{10.1145/1653662.1653736} designed a system called SMIT which uses approximate graph edit distances of function-call graphs to compute the $K$ nearest neighbors in a malware database. Although they concluded that function-call graphs are less susceptible to obfuscations, many tools have since been developed to perform sophisticated transformations on functions \cite{tigressHome}, \cite{ieeespro2015-JunodRWM}. Kinable et al. \cite{KinableJoris} proposed similar techniques to cluster call-graphs using approximate edit distance and density-based clustering (DBSCAN).

Park et al. \cite{10.1145/1852666.1852716} defined the similarity measure \textit{maximal common subgraph} that is used to classify system call graphs by comparing them to those generated by malicious software. The normalized similarity of two graphs is the ratio between the size of the maximal shared subgraph and the larger of the two graphs. Although this concept of similarity resembles the subgraph similarity we propose in this paper, our measure focuses on smaller fragments of the subgraph to be more resistant to fragmentation and computationally feasible on large graphs. Hisham et al. \cite{8752028} utilized graph algorithmic properties such as density, shortest path, diameter, radius, and other centrality measures to construct features from control flow graphs, which are classified using machine learning. Yamaguchi et al. \cite{6956589} introduced a novel representation of source code called \textit{code property graph}, which merges abstract syntax trees, control flow graphs, and program dependence graphs into a single structure. The combined data structure can be mined effectively to discover vulnerabilities and other properties of the program.

Later works also adopted deep learning techniques to classify resource graphs. For example, Gao et al. \cite{GAO2021102264} demonstrated the effectiveness of graph convolutional networks in detecting android malware based on their API usage graphs. Anderson et al. \cite{andersonblake} presents the only study we know of that uses instruction-level resource graphs to classify malicious software behavior. They use the adjacency of instructions in a dynamic execution trace to construct a Markov chain of assembly instructions, which are classified using graph kernels and machine learning.

The body of existing work on the usage of resource graphs in malware detection fails to address the challenges of detecting obfuscated cryptojacking malware. High-level system-call and API usage graphs do not provide much insight into the numerical computation of cryptominers. Furthermore, function call graphs and control flow graphs are trivially obfuscated using standard obfuscators such as Tigress \cite{tigressHome} and OLLVM \cite{ieeespro2015-JunodRWM}.

At the instruction level, resource graphs grow significantly larger, making them much more challenging to analyze. The difficulty in utilizing instruction traces lies in simplifying massively granular information in a meaningful way and comparing the extracted features effectively. The method proposed by Anderson et al. \cite{andersonblake} to analyze instruction traces captures only the adjacency of instructions rather than the underlying data flow, which could be vulnerable to instruction reordering and speculative execution. Most importantly, the Markov chain representation does not effectively convey local substructures in the data-flow graphs, making it difficult to uncover superimposed computation. The current literature lacks a way to effectively simplify instruction-level data flow graphs and compare them effectively. The techniques we proposed in this paper allow us to exploit these graphs which have never been explored in malware detection.

Finally, machine learning techniques are less suitable than fingerprinting methods due to the limited diversity of cryptominers \cite{tekiner2021browser, 9092245} and mining algorithms \cite{BAJRA2024102571}. The results of Tekiner et al. \cite{tekiner2021browser} suggest that the large number of mining samples used in prior machine learning studies are most likely duplicate cryptominer samples originating from a small number of service providers mining a few select cryptocurrencies. In the remainder of this study, we propose a method of simplifying large graphs while preserving repetitive local features. We also define a novel notion of subgraph similarity that captures local graph properties and is resistant to obfuscation. Our paper represents the first step in comprehensively utilizing instruction-level data-flow graphs in cryptominer detection.

\subsection{WebAssembly}

WebAssembly (Wasm) is a low-level bytecode format designed to run at near-native performance on a wide variety of systems. It aims to complement JavaScript by providing a platform for deploying high-performance software on the web, as well as providing a portable compilation target for higher-level languages such as C and C++ \cite{10.1145/3140587.3062363}. While WebAssembly has seen gradual adoption into the mainstream \cite{cncfCNCFAnnual}, \cite{chandramouli2024data}, recent studies indicate that it is still largely used for illicit purposes such as obfuscation and cryptomining \cite{10.1007/978-3-030-22038-9_2}, \cite{9860829}. WebAssembly presents an exciting future for delivering fast and energy-efficient applications through the web, but more studies need to be conducted on its security implications and mitigation.

A Wasm binary takes the form of a module which contains functions, globals, tables, and memories, that can be exported and imported to integrate with JavaScript environments. WebAssembly operates as a stack machine, meaning that a function consists of a sequence of instructions that manipulate values on an operand stack, popping argument values and pushing results to the stack, as opposed to register machines which perform operations on registers. An example WebAssembly snippet and the corresponding data-flow graph are presented in \autoref{fig:dfgex}. The details of our instrumentation and data-flow collection methods are described in \autoref{section:implementation}.

\begin{figure}
    \begin{center}
    \includegraphics[width=0.5\linewidth]{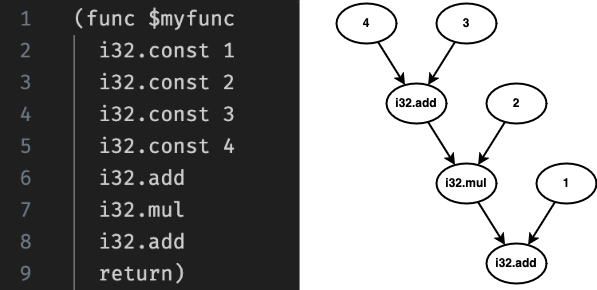}
    \caption{A short WebAssembly function in text format alongside its data-flow graph.}
    \label{fig:dfgex}
    \end{center}
\end{figure}

\section{Threat Model}

In this study, we consider malicious web pages which mine cryptocurrency in the background. We assume that the majority of the proof-of-work algorithm is implemented in WebAssembly, which is the case for most cryptojacking scripts in the wild based on previous studies \cite{kharraz2019outguard}, \cite{10.1007/978-3-030-22038-9_2}, \cite{10.1145/3243734.3243858}. The attacker may mine with or without a pool, and protocol communications may be subject to any obfuscation. The web page may employ throttling or perform arbitrary tasks concurrently, and the WebAssembly binary may be subjected to standard code obfuscation and anti-analysis transformations. We assume that the detector has full access to the browser and is able to collect instruction traces of all WebAssembly execution.

\section{Graph Analysis}

Our analysis is motivated by the observation that proof-of-work algorithms perform extensive amounts of repetitive and artificial computation. In data-flow graphs of dynamic single assigned variables, redundant computations emerge as regular substructures, allowing us to compress the graph into a much smaller form while preserving local semantics by combining these structures until there is little to no repetition remaining. Since the graphs represent computation at a very low abstraction level, we hypothesize that it should also maintain a degree of similarity between equivalent programs, even under obfuscation. First, we introduce our method of simplifying large graphs based on the aforementioned observations, then we define a new subgraph similarity measure capable of discovering similarities between seemingly different graphs of equivalent programs.

The overall approach we take in detecting cryptominers is to maintain a database of cryptominer fingerprints generated from their data-flow graphs. The fingerprints are simplified versions of the original graphs. To test a sample for malicious behavior, we compare the sample's fingerprint with the known cryptominer fingerprints in our database using a subgraph similarity measure to determine whether a cryptominer fingerprint is a subset of the sample's behavior.

\begin{figure}
\captionsetup[subfigure]{justification=Centering}
\begin{subfigure}[t]{0.2\textwidth}
    \includegraphics[width=\textwidth]{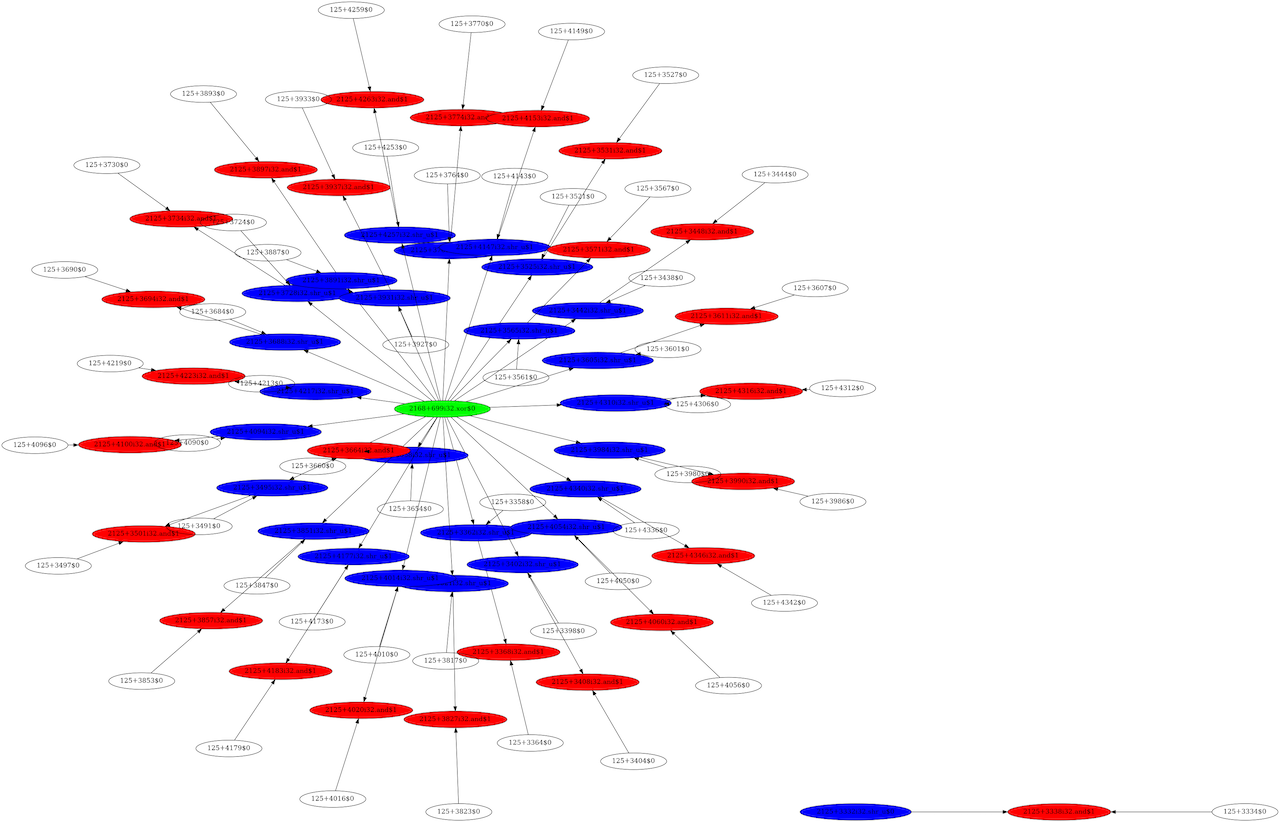}
    \caption{CryptoNight}
    \label{figure:cn}
\end{subfigure}\hspace{\fill}
\begin{subfigure}[t]{0.2\textwidth}
    \includegraphics[width=\linewidth]{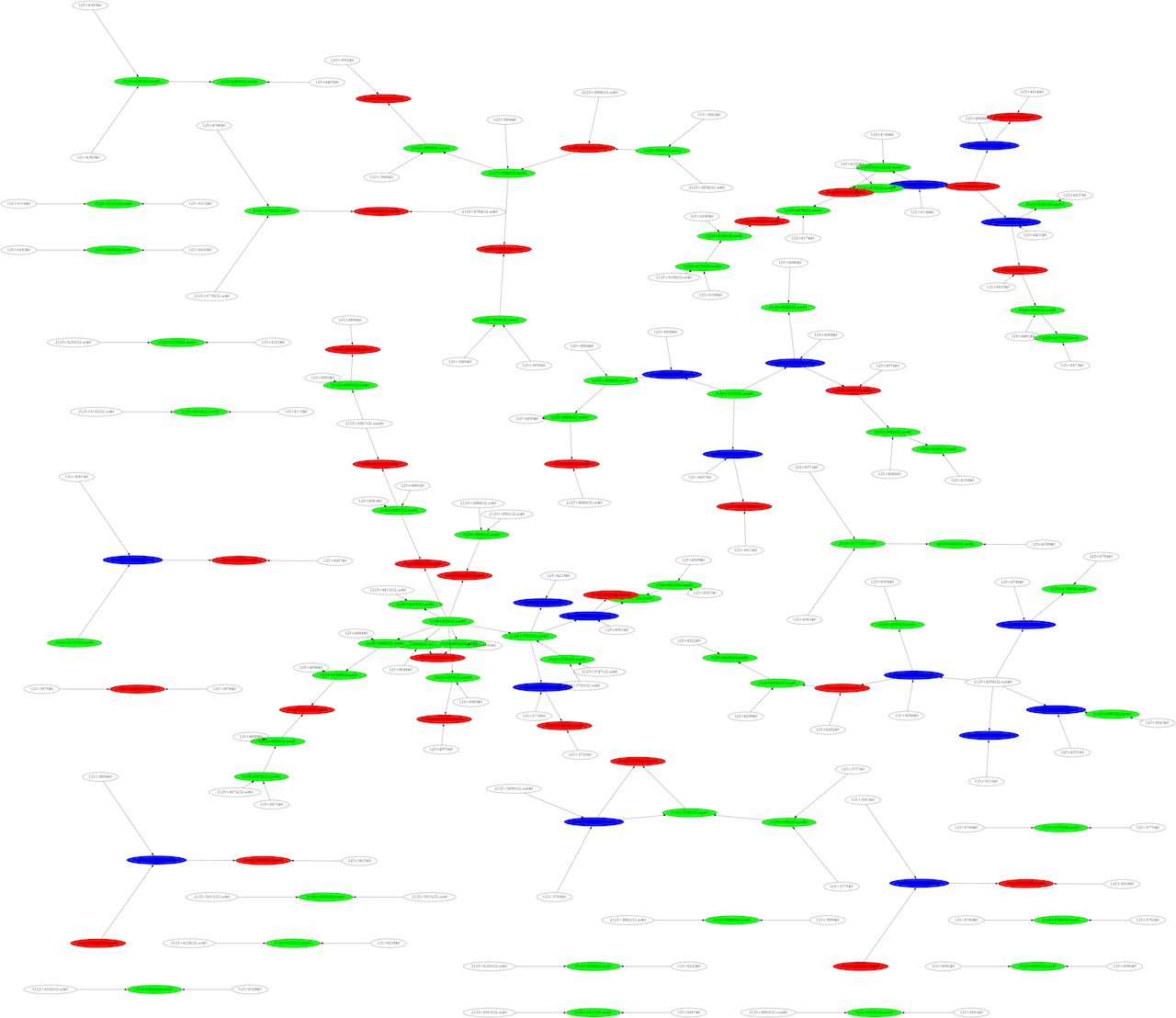}
    \caption{emcc-obf substitution}
    \label{figure:cn-ems}
\end{subfigure}\hspace{\fill}
\begin{subfigure}[t]{0.2\textwidth}
    \includegraphics[width=\linewidth]{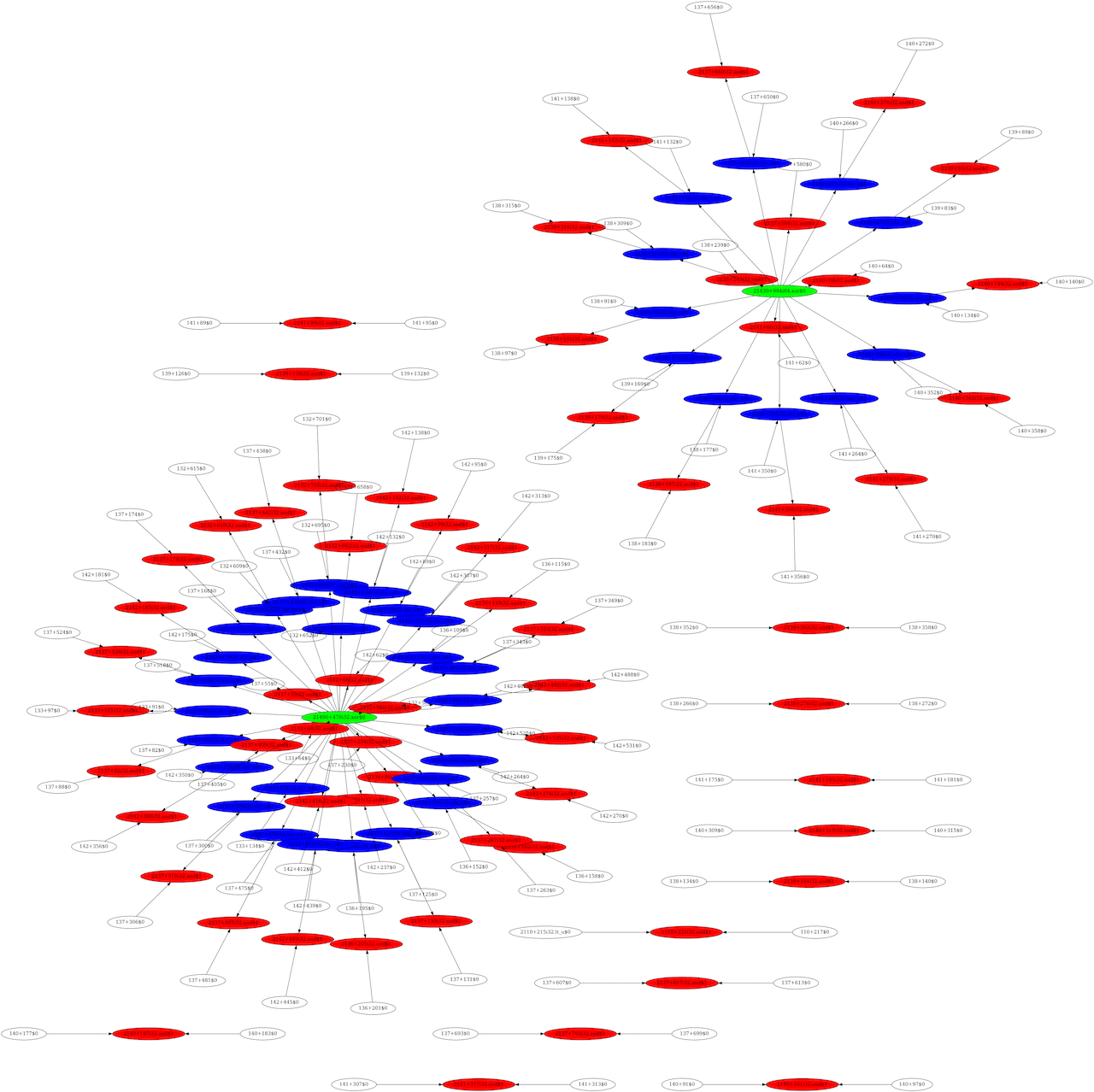}
    \caption{Tigress function split and flatten}
    \label{figure:cn-sf}
\end{subfigure}\hspace{\fill}
\begin{subfigure}[t]{0.2\textwidth}
    \includegraphics[width=\linewidth]{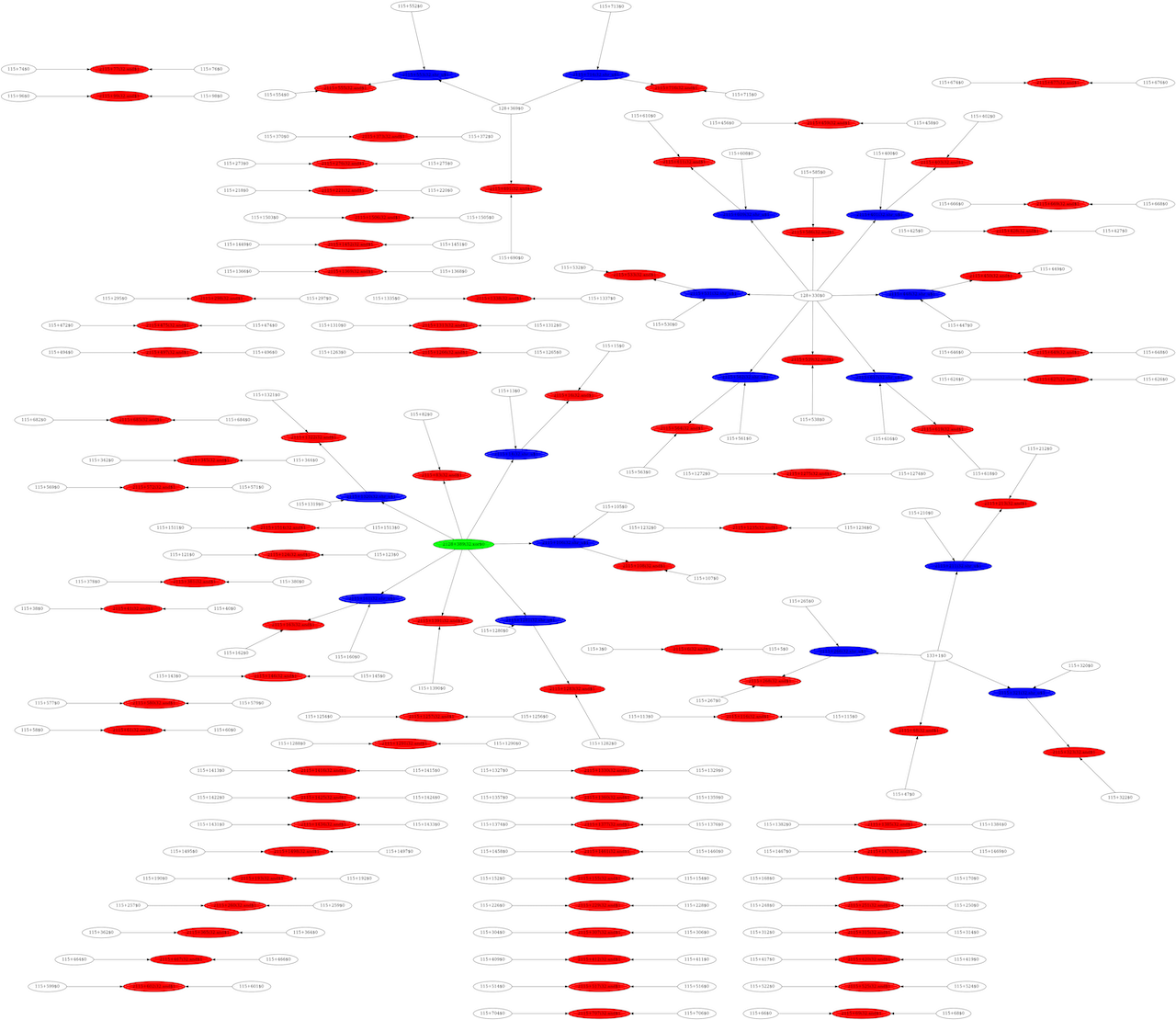}
    \caption{Tigress encode arithmetic}
    \label{figure:cn-ea}
\end{subfigure}
\bigskip
\\
\begin{subfigure}[t]{0.2\textwidth}
    \includegraphics[width=\linewidth]{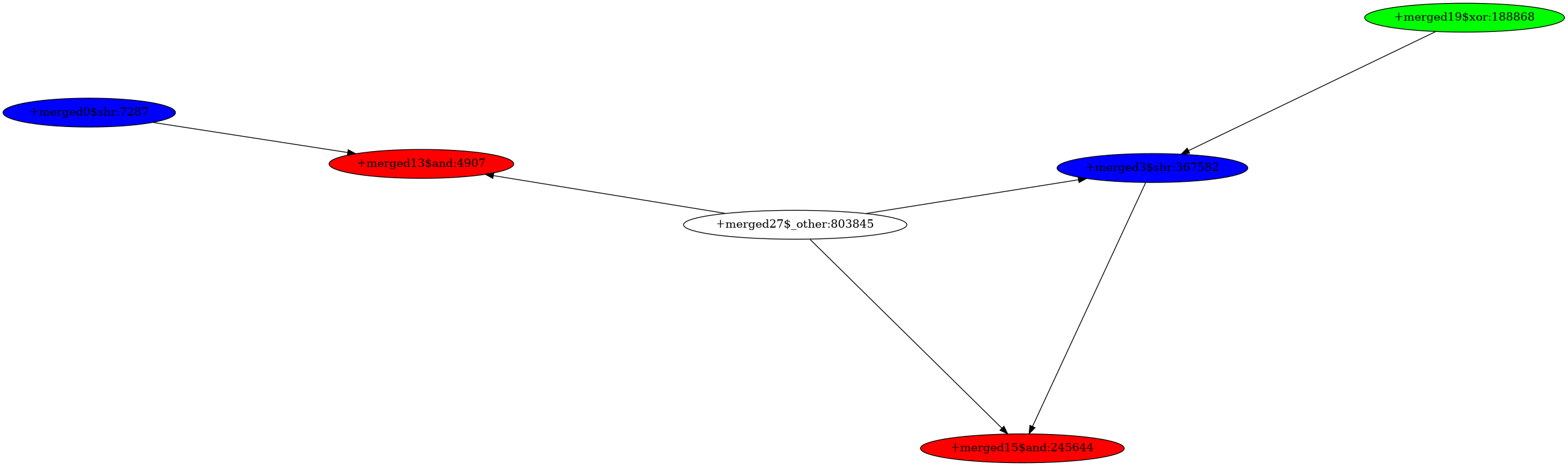}
    \caption{Simplified CryptoNight}
    \label{figure:cn-simple}
\end{subfigure}\hspace{\fill}
\begin{subfigure}[t]{0.2\textwidth}
    \includegraphics[width=\linewidth]{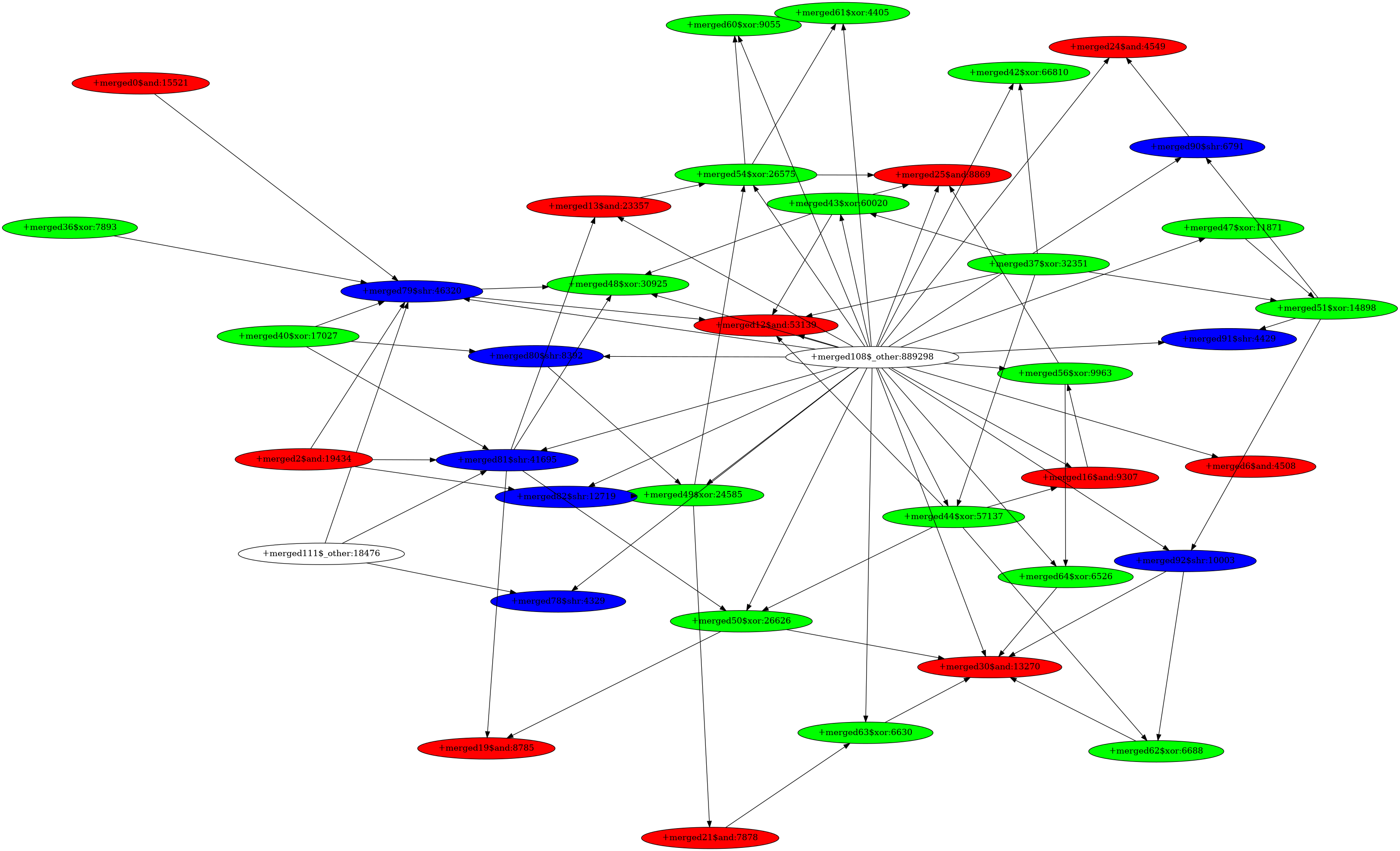}
    \caption{Simplified emcc-obf substitution}
    \label{figure:cn-ems-simple}
\end{subfigure}\hspace{\fill}
\begin{subfigure}[t]{0.2\textwidth}
    \includegraphics[width=\linewidth]{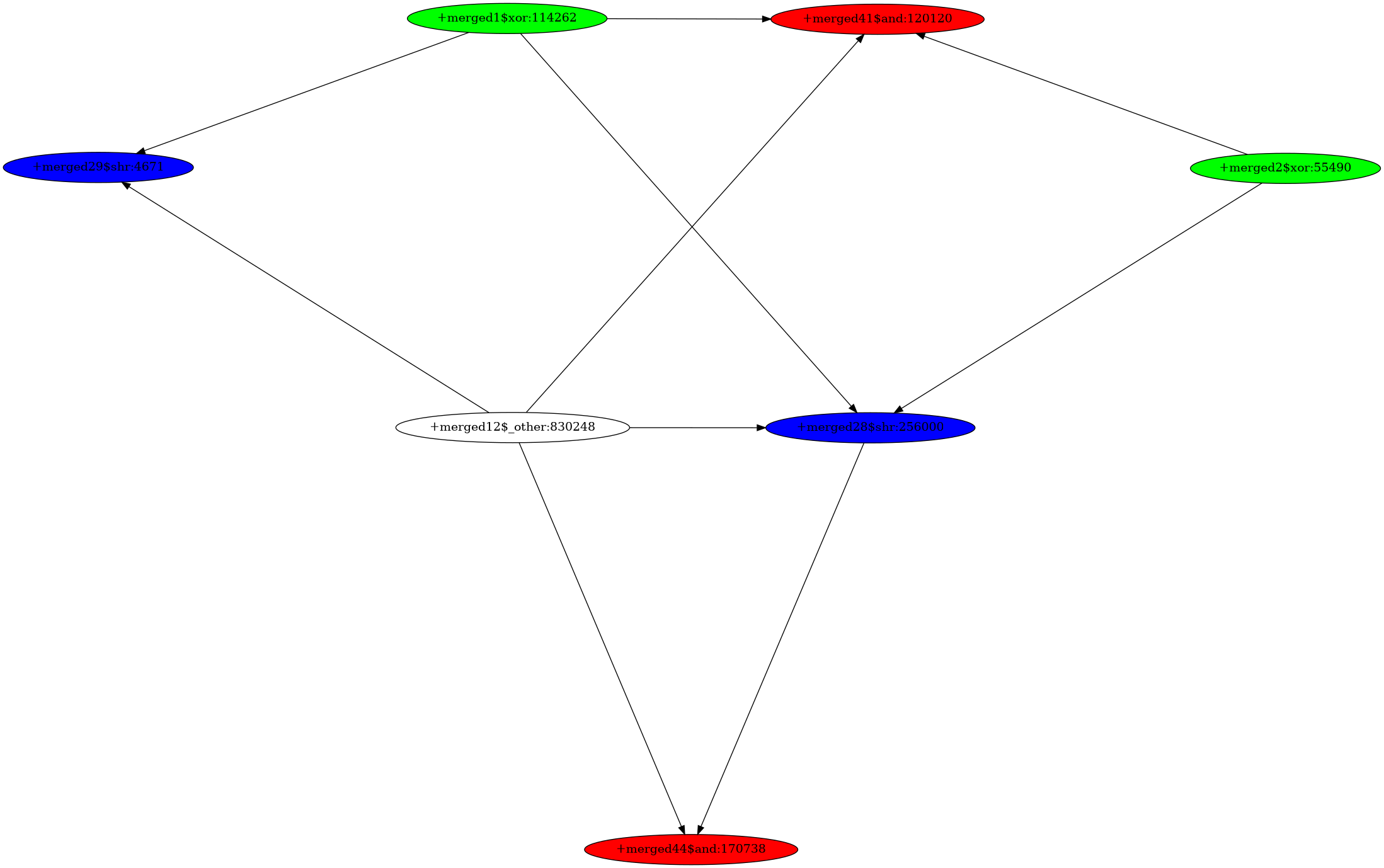}
    \caption{Simplified Tigress function split and flatten}
    \label{figure:cn-sf-simple}
\end{subfigure}\hspace{\fill}
\begin{subfigure}[t]{0.2\textwidth}
    \includegraphics[width=\linewidth]{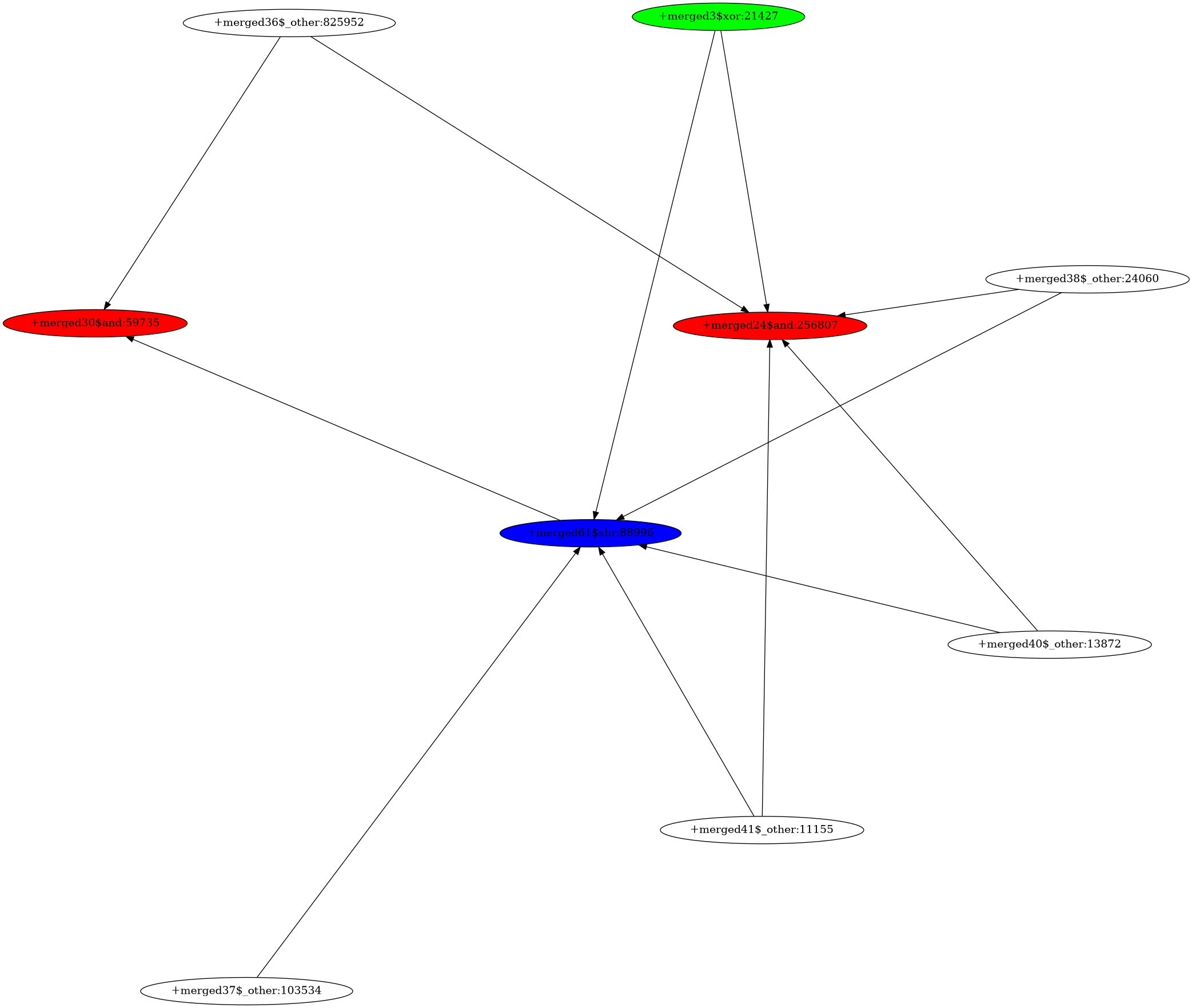}
    \caption{Simplified Tigress encode arithmetic}
    \label{figure:cn-ea-simple}
\end{subfigure}
\caption{\textbf{(CryptoNight data-flow graphs)} The images show visualizations of the data-flow graphs of the CryptoNight POW algorithm in its original and simplified forms and under different obfuscations. Figure (a) shows the original graph, while (b) to (d) show three obfuscated versions. Figures (e) to (h) show the simplified versions of (a) to (d). Vertices highlighted red, green, and blue represent \texttt{and}, \texttt{xor}, and \texttt{shr} instructions respectively. While other instructions are not traced, they may still appear as data origin in the graph represented by the uncolored vertices.}
\label{figure:cn-all}
\end{figure}

\subsection{Instruction Level Data-Flow Graphs in WebAssembly}
By treating each execution of a WebAssembly instruction as a distinct variable and capturing the flow of data between them, we create a directed acyclic graph with no multi-edges where the runtime variables are single-assigned. This graph represents the flow of data between executions of instructions in the program. The directed acyclic property is used throughout our design. To reduce the instrumentation overhead and computation load, we may record only data-flow into certain instructions of interest in the domain of our malware. Based on the results of Wang et al. \cite{wang2018seismic}, the use of three binary instructions, \verb|and|, \verb|shr|, and \verb|xor|, is characteristic of cryptomining behavior. We elect to instrument only these three instructions to reduce computational burden in comparing graphs.

\subsection{Fingerprinting Through Graph Simplification}
The data-flow graph we collect is necessarily large in order to capture a complete representation of a program's behavior. \autoref{figure:cn} shows a 100 instruction snapshot of the recorded data flow from a CryptoNight mining algorithm. This snapshot considers the three instructions of interest that are executed within the time frame and records the data flow into them. Although the graph is massive, we recognize repeated patterns in the graph which correspond to the iterations of the mining algorithm. In order to generate a fingerprint of a program's data flow, we propose a graph simplification technique that exploits these repetitions to create a compact signature that captures the instruction-level behavior of a program. The key idea in our approach is to merge isomorphic substructures located at the same depth within the graph until we have a minimal representation of the data flow. First, we formally define the repeated substructures, and then we introduce a random walk-based approximation to efficiently compute the simplification.

\begin{definition}[Rooted Subgraph]
    Let $G = (V,E)$ be a directed acyclic graph. A subgraph $S \subseteq G$ is a \textbf{rooted subgraph} if there exists $v \in V(S)$ such that every vertex in $S$ is reachable from $v$. This $v$ is unique when $G$ is acyclic and is called the \textbf{root} of the subgraph. $S$ is \textbf{maximal} if it is the largest subgraph with root $v$.
\end{definition}

The isomorphic \textit{maximal rooted subgraphs} represent repeated substructures that we want to eliminate in the simplified graphs, but isomorphic subgraphs might appear at different locations in the program that are semantically different. To preserve this distinction, we introduce the notion of \textit{depth} which describes where a rooted subgraph $S$ is located within the graph $G$ and refrain from merging subgraphs located at different depths. Note that the \textit{depth} of a rooted subgraph is an external property of $S$ inside the graph $G$ and is not related to its construction.

\begin{definition}[Depth]
    The \textbf{depth} of a vertex $v \in V(G)$ is the number of edges on the longest path in $G$ that ends in $v$. The depth of a rooted subgraph in $G$ is the depth of the root vertex of the subgraph in $G$.
\end{definition}

Intuitively, the depth of a rooted subgraph contains information about the location and sequence of the corresponding instruction in the program that should be preserved in the simplified graph. \autoref{figure:simplifyprocess1} shows many maximal isomorphic rooted subgraphs located at the same depth in the CryptoNight algorithm with the roots highlighted. These subgraphs represent the same computation carried out in different iterations of the algorithm. Finally, we iteratively merge all maximal isomorphic rooted subgraphs with distinct roots of the same depth until we reach a fixed point. Note that there is effectively no difference between merging the entire subgraph and only merging the roots since we merge until reaching a fixed point. \autoref{figure:simplifyprocess} shows this process on a smaller version of the CryptoNight graph. The graph shown in \autoref{figure:simplifyprocess1} shows the center and a few branches of the large circular structure seen in the original CryptoNight graph in \autoref{figure:cn}. \autoref{figure:simplifyprocess5} is related to \autoref{figure:cn-simple} but not exactly the same due to the approximation method we introduce in the next section.

\begin{figure}
\captionsetup[subfigure]{justification=Centering}
\begin{subfigure}[t]{0.18\textwidth}
    \includegraphics[width=\textwidth]{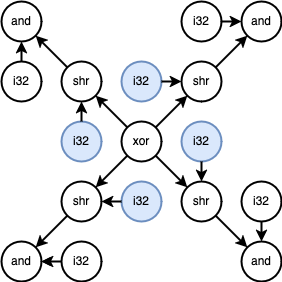}
    \caption{}
    \label{figure:simplifyprocess1}
\end{subfigure}\hspace{\fill}
\begin{subfigure}[t]{0.18\textwidth}
    \includegraphics[width=\linewidth]{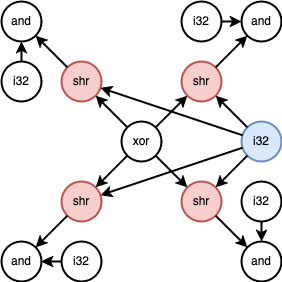}
    \caption{}
    \label{figure:simplifyprocess2}
\end{subfigure}\hspace{\fill}
\begin{subfigure}[t]{0.18\textwidth}
    \includegraphics[width=\linewidth]{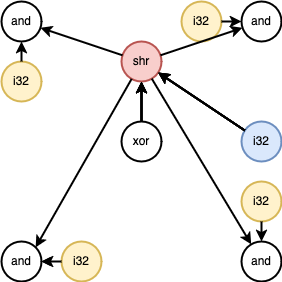}
    \caption{}
    \label{figure:simplifyprocess3}
\end{subfigure}\hspace{\fill}
\begin{subfigure}[t]{0.18\textwidth}
    \includegraphics[width=\linewidth]{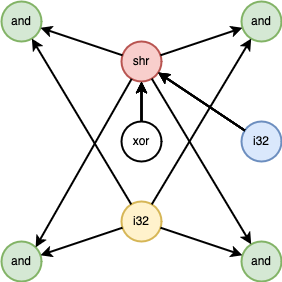}
    \caption{}
    \label{figure:simplifyprocess4}
\end{subfigure}\hspace{\fill}
\begin{subfigure}[t]{0.18\textwidth}
    \includegraphics[width=\linewidth]{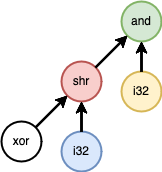}
    \caption{}
    \label{figure:simplifyprocess5}
\end{subfigure}
\caption{A step-by-step reduction of the CryptoNight graph. The root vertices of isomorphic subgraphs are highlight and merged at each step.}
\label{figure:simplifyprocess}
\end{figure}

\subsubsection{Approximation Through Backward Random Walks}

The process of simplifying the graph requires us to search for isomorphic subgraphs in large data-flow graphs. While a naive algorithm may perform exact subgraph matching, it would be intractable because the subgraph isomorphism problem is NP-complete \cite{DBLP:books/acm/23/Cook23a}. For reference, the graphs in our experiment have up to 1000 nodes and 2000 edges. To reduce the complexity of the simplification process, we introduce an approximate algorithm that simplifies graphs using backward random walks. Intuitively, when walking backwards randomly on a graph starting from a random vertex, the probability of a specific vertex being visited is largely dependent on the structure and size of its descendants. By computing the probability of a random backward walk visit, we get an approximate characterization of the maximal rooted subgraph of a vertex.

\begin{definition}[Backward Random Walk]
    A \textbf{backward walk} is a sequence of vertices $$P = \{v_1, v_2, ..., v_k | v_i \in V(G) \text{ for } 1 \le i \le k\}$$ such that there exists an edge $e = v_{i+1} \rightarrow v_{i}$ for $1 \le i \le k - 1$, and $v_k$ has no incoming edge. The backward walk visits vertices in the opposite direction of the edges (i.e. $v_1, v_2, ..., v_k$).
\end{definition}

Since the graph is acyclic, each vertex is visited at most once per backward walk. Denote $P(v)$ the probability that a vertex $v \in V(G)$ is visited in a backward random walk. The following lemma establishes a connection between the backward random walk and the maximal rooted subgraphs.

\begin{lemma}\label{lemma:probability}
    Let $G$ be a directed acyclic graph with no multi-edges. The probability that a vertex $v \in V(G)$ is visited in a random backward walk is $$P(v) = \frac{1}{|V(G)|} + \sum_{v_c \in C(v)}\frac{1}{|I(v_c)|}P(v_c)$$ where $C(v)$ denotes the set of children vertices of $v$ in the directed graph and $|I(v_c)|$ denotes the number of incoming edges into $v_c$.
\end{lemma}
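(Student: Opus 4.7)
The plan is a direct application of the law of total probability, partitioning the event that the walk visits $v$ into mutually exclusive sub-events and evaluating each one using the Markov structure of the random walk. Implicit in the statement is the standard randomization model for a backward random walk: the starting vertex $v_1$ is drawn uniformly from $V(G)$, and whenever the walk currently stands at a vertex $u$ with $|I(u)| \geq 1$, it next moves to a vertex chosen uniformly at random among the $|I(u)|$ vertices that have an edge into $u$, halting the first time it reaches a source. I would state this model explicitly at the start of the proof so that the meaning of $P(v)$ is unambiguous.

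First I would observe that, since $G$ is acyclic, the walk visits any fixed vertex at most once. Hence the event ``$v$ is visited'' decomposes as the disjoint union of ``$v_1 = v$'' together with the events ``$v_c$ is visited and the immediate next step of the walk is $v$'' as $v_c$ ranges over $C(v)$; the latter events are pairwise disjoint because $v$, being visited at most once, is entered from at most one predecessor on the walk. By the uniform choice of starting vertex, $\Pr[v_1 = v] = 1/|V(G)|$. For each child $v_c \in C(v)$, the Markov property and the uniform transition rule give $\Pr[\text{next step after } v_c \text{ is } v \,\mid\, v_c \text{ is visited}] = 1/|I(v_c)|$, and so the joint event has probability $P(v_c)/|I(v_c)|$. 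Summing the disjoint contributions then yields exactly the claimed recurrence.

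The main obstacle, though more of a bookkeeping point than a substantive difficulty, is to justify cleanly that the conditional transition probability out of $v_c$ does not depend on how $v_c$ was reached, so that the joint probability factors as $P(v_c) \cdot 1/|I(v_c)|$. This follows from the memoryless construction: conditional on the walk standing at $v_c$ at any time, the distribution of its next step depends only on $v_c$ and not on the prior history. One should also note that $|I(v_c)| \geq 1$ is automatic from $v \in I(v_c)$'s source side, so the denominator is well defined. Once these points are spelled out, the remainder of the argument is elementary set-theoretic manipulation.
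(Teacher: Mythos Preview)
Your proposal is correct and follows essentially the same decomposition as the paper: partition the event ``$v$ is visited'' into the starting event and the events ``visit $v_c$ then step to $v$'' for each child $v_c$, argue mutual exclusivity via acyclicity, and evaluate each piece. You are actually more careful than the paper in spelling out the random-walk model and the Markov/factorization step $P(v_c)\cdot 1/|I(v_c)|$, which the paper leaves implicit.
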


The intuition behind Lemma 4 is that, given that a backward walk $W$ contains $v$, either: (1) $v$ is the first vertex in the backward walk; or (2) the backward walk visits a child of $v$ then proceeds (backward) to $v$ itself. Therefore, the probability of visiting $v$ can be decomposed to the probability that its children will be visited.

\begin{proof}
    Consider a backward random walk $W$. Let $E_0$ denote the event that $v$ is the first vertex in the backward walk $W$. Denote $v_{c_1}, v_{c_2}, ..., v_{c_M} \in C(v)$ the children vertices of $v$. Let $E_i$ for $1 \le i \le M$ denote the event that the backward walk visits $v_{c_i}$ and then $v$ consecutively. Then the event that $v \in W$ is 
    $$E(v \in W) = E_0 \cup \bigcup_{i=1}^{M} E_i $$
    In other words, $v$ is in $W$ if and only if $W$ starts with $v$, or $W$ visits a child of $v$ and proceeds to $v$. It is clear that $E_0$ is mutually exclusive to every $E_i$ where $i \neq 0$. Moreover, every $E_i$ for $1 \le i \le M$ is mutually exclusive, since if two $E_i$ and $E_j$ are true for $i \neq j$, the graph would contain a cycle. So that 
    $$
        P(v \in W) = P(E_0) + \sum_{i=1}^{M} P(E_i)
        = \frac{1}{|V(G)|} + \sum_{v_c \in C(v)}\frac{1}{|I(v_c)|}P(v_c)
    $$
\end{proof}

\autoref{lemma:probability} implies that the probability that a vertex is visited in a random backward walk is characterized entirely by its maximal rooted graph. We state this formally in the next theorem.

\begin{theorem}\label{theorem:isomorphic}
    Let $H_1, H_2 \subseteq G$ be maximal isomorphic rooted subgraphs such that every pair of isomorphic vertices in $H_1$ and $H_2$ contains the same number of incoming edges in $G$. Then the roots $v_1 \in H_1$ and $v_2 \in H_2$ have the same probability of being visited in a random backward walk.
\end{theorem}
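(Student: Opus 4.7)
The plan is to prove the stronger statement that $P(u) = P(\phi(u))$ for every vertex $u \in V(H_1)$, where $\phi : H_1 \to H_2$ denotes the rooted isomorphism sending $v_1$ to $v_2$. I will proceed by induction on the \emph{height} of $u$ in $G$, meaning the length of the longest directed path starting at $u$; this is well-defined because $G$ is a DAG.

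For the base case, a vertex $u$ of height zero is a sink in $G$ and therefore has no children. Its image $\phi(u)$ must also be a sink: any child of $\phi(u)$ in $G$ would lie in $V(H_2)$ by maximality of $H_2$, and pulling it back through $\phi^{-1}$ would give a child of $u$ in $H_1$, a contradiction. The recursion of \autoref{lemma:probability} then collapses to $P(u) = P(\phi(u)) = 1/|V(G)|$.

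For the inductive step, suppose the claim holds for every vertex of strictly smaller height and let $u \in V(H_1)$ have height $k > 0$. Maximality of both subgraphs, combined with the fact that $\phi$ is a graph isomorphism, guarantees that $\phi$ restricts to a bijection between $C(u)$ and $C(\phi(u))$: every child of $u$ in $G$ lies in $V(H_1)$ and is mapped into $C(\phi(u))$, and symmetrically for $\phi(u)$. Each $v_c \in C(u)$ has strictly smaller height than $u$, so the inductive hypothesis gives $P(v_c) = P(\phi(v_c))$, and the theorem's hypothesis on in-degrees yields $|I(v_c)| = |I(\phi(v_c))|$. Substituting into
\[
P(u) \;=\; \frac{1}{|V(G)|} + \sum_{v_c \in C(u)} \frac{1}{|I(v_c)|}\, P(v_c)
\]
and performing the analogous expansion for $\phi(u)$ matches the two sums term by term, so $P(u) = P(\phi(u))$. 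Taking $u = v_1$ gives the theorem.

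The main obstacle I expect is the careful use of \emph{maximality}. The recursion in \autoref{lemma:probability} references children and in-degrees taken in $G$, not within the subgraphs, so without maximality one could have $u$ possess children lying outside $H_1$ while $\phi(u)$ has no counterpart (or vice versa), and the sums would fail to align. Maximality closes this gap by forcing the out-neighborhoods of isomorphic vertices to correspond exactly under $\phi$, after which the in-degree hypothesis handles the $1/|I(v_c)|$ coefficients. The rest is a routine DAG induction.
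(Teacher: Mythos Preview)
Your proof is correct and follows essentially the same approach as the paper's: both arguments unfold the recursion of \autoref{lemma:probability} and induct down the DAG, using maximality to guarantee that the children of corresponding vertices match under $\phi$ and the in-degree hypothesis to align the coefficients. Your packaging is slightly cleaner---by strengthening the claim to $P(u)=P(\phi(u))$ for all $u\in V(H_1)$ and inducting on the height of $u$ in $G$, you sidestep the paper's auxiliary \autoref{lemma:longestpath} (which exists only to show the induction parameter drops when passing to children's subgraphs) and avoid having to re-instantiate the theorem's hypotheses on the sub-subgraphs $H_1',H_2'$ at each step.
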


To prove this theorem, we need the following short lemma.

\begin{lemma}\label{lemma:longestpath}
    Every path of the longest length in a maximal rooted subgraph $H \subseteq G$ contains the root vertex.
\end{lemma}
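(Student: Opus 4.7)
The plan is to prove the lemma by contradiction. Assume for contradiction that some longest path $P$ in the maximal rooted subgraph $H$ does not contain the root vertex $v$. Let $u$ be the first (source) vertex of $P$, so $u \neq v$. Since $H$ is rooted at $v$, every vertex in $H$ is reachable from $v$, so there exists a directed path $Q$ from $v$ to $u$ in $H$, and $Q$ has length at least one because $v \neq u$.

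Next, I would concatenate $Q$ with $P$ to produce a walk from $v$ through $u$ to the endpoint of $P$. The key observation is that in a DAG, any walk is automatically a simple path: if the concatenated walk revisited some vertex $w$, then the portion of the walk between the two occurrences of $w$ would form a directed cycle, contradicting acyclicity of $G \supseteq H$. In particular, $Q$ and $P$ can share no vertex besides $u$, since a shared intermediate vertex $w$ would give a cycle through $w$ and $u$. Therefore the concatenation $Q \cdot P$ is a simple directed path in $H$ of length $|Q| + |P| > |P|$, contradicting the choice of $P$ as a longest path. Hence every longest path must contain $v$.

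The main (mild) obstacle is being careful that the concatenation is indeed a path and lives inside $H$. The former follows from the DAG property as above, and the latter follows from maximality: $H$ contains all vertices reachable from $v$ together with the induced edges, so both $Q$ and $P$ lie entirely in $H$, and so does their concatenation. Everything else is essentially bookkeeping, so I expect the full write-up to be only a few lines longer than this sketch.
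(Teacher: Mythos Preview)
Your proposal is correct and follows essentially the same approach as the paper's own proof: take a path avoiding the root, use reachability from the root to find a path to its start vertex, invoke acyclicity to rule out overlap, and concatenate to obtain a strictly longer path. Your write-up is in fact slightly more careful than the paper's, since you explicitly justify why the connecting path $Q$ lies inside $H$ (via maximality) and why the concatenation is simple.
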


\begin{proof}
    Let $P = \{v_1, v_2, ..., v_n\}$ be an arbitrary path in $H$ such that the root vertex $v$ is not in $P$. Since $v$ is the root, there exists a path $P'$ from $v$ to $v_1$. Since $H$ is acyclic, $P'$ cannot intersect $P$, otherwise we can form a cycle. Therefore, we can form a new path by concatenating $P'$ and $P$ which is longer than $P$ and contains $v$, proving the lemma.
\end{proof}

\begin{proof}[Proof of Theorem 5]
    We proceed by induction on the length $N$ of the longest path in $H_1$ and $H_2$. 
    \begin{description}
        \item[Base:]
            When $N=0$, $H_1$ and $H_2$ are singleton graphs.
            Thus $$P(v_1) = P(v_2) = \frac{1}{|V(G)|}$$
        \item[Induction:]
            Since $H_1$ and $H_2$ are isomorphic, the sets $C(v_1)$ and $C(v_2)$, the children of the corresponding root vertices, are also isomorphic. Consider two isomorphic children $c_{1} \in C(v_1)$ and $c_{2} \in C(v_2)$. Each of these two vertices induce a maximal rooted subgraph $H'_1$ and $H'_2$ which are also isomorphic. Moreover, $H'_1$ and $H'_2$ are subgraphs of $H_1$ and $H_2$ which do not contain the roots $v_1$ and $v_2$, thus, the longest path in $H'_1$ and $H'_2$ is at most $N-1$ by \autoref{lemma:longestpath}. By the induction hypothesis, $P(c_1) = P(c_2)$ for all isomorphic pair, $c_1 \in C(v_1)$ and $c_2 \in C(v_2)$. Finally, since we assume that every pair of isomorphic vertices in $H_1$ and $H_2$ contains the same number of incoming edges in $G$, $|I(c_1)| = |I(c_2)|$ as well for such pairs. Therefore,
            $$P(v_1) = 
            \frac{1}{|V(G)|} + \sum_{c_1 \in C(v_1)}\frac{1}{|I(c_1)|}P(c_1) =
            \frac{1}{|V(G)|} + \sum_{c_2 \in C(v_2)}\frac{1}{|I(c_2)|}P(c_2) =
            P(v_2)$$
    \end{description}
\end{proof}

\paragraph*{Approximate Simplification Algorithm}

\autoref{theorem:isomorphic} tells us that much of the structural information of the subgraphs are embedded in the roots in a backward random walk. Although it is possible that distinct non-isomorphic maximal rooted subgraphs may coincide with the same root vertex probability, or that isomorphic subgraphs may result in different root probabilities if some of the vertices have different numbers of incoming edges, we expect this to happen rarely in the type of computation graphs that we are working with. It is justifiable by \autoref{theorem:isomorphic} that we can approximately merge all maximal isomorphic rooted subgraphs by combining all vertices with the same random backward walk probabilities. As it does not matter whether we merge the entire subgraph or just the root, the process is straightforward. \autoref{algo:simplifyapprox} details the overall process. \autoref{figure:cn-simple} shows the result of applying \autoref{algo:simplifyapprox} to the CryptoNight graph. More examples of simplified graphs are shown in \autoref{figure:simplifiedgraphsexample}.

\begin{figure}[t]
\captionsetup[subfigure]{justification=Centering}
\begin{subfigure}[t]{0.15\textwidth}
    \includegraphics[width=\textwidth]{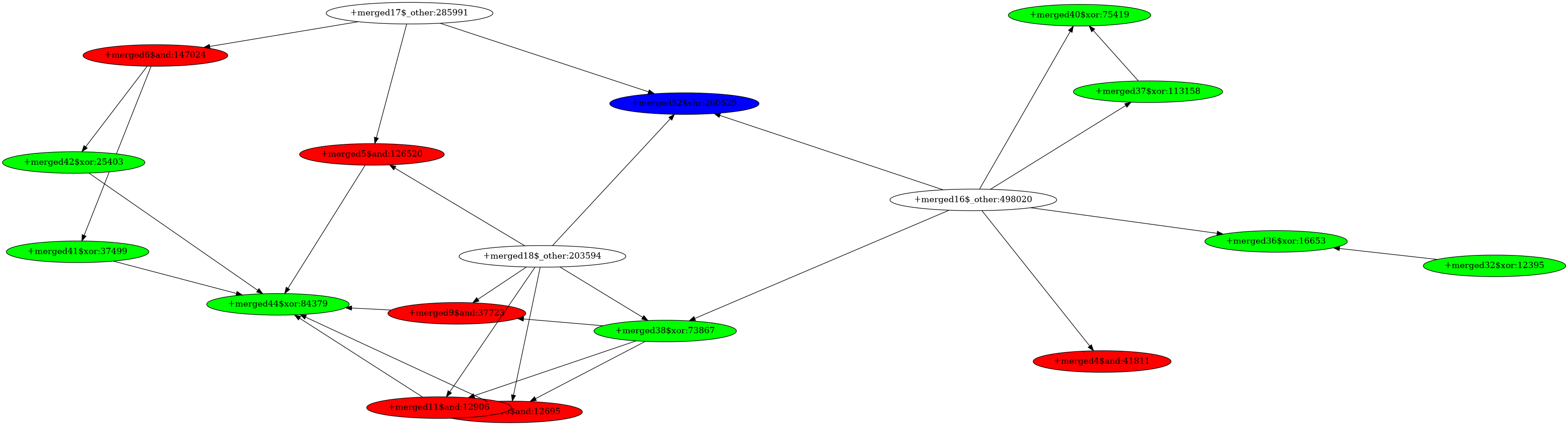}
    \caption{btc}
    \label{figure:btc}
\end{subfigure}\hspace{\fill}
\begin{subfigure}[t]{0.15\textwidth}
    \includegraphics[width=\linewidth]{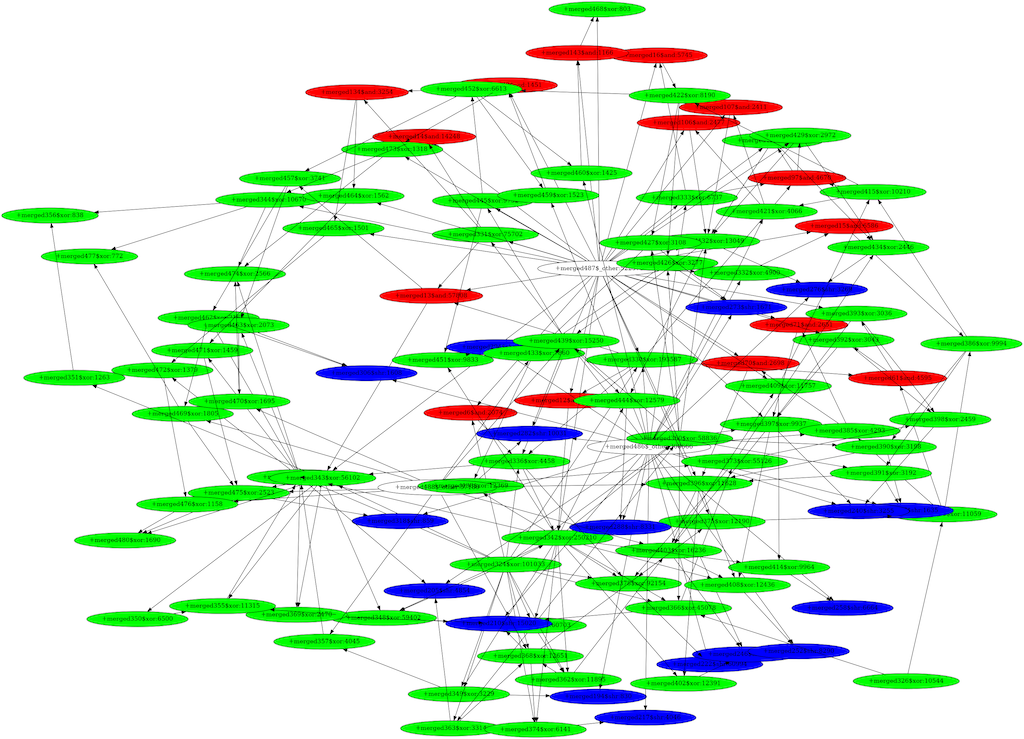}
    \caption{eth}
    \label{figure:eth}
\end{subfigure}\hspace{\fill}
\begin{subfigure}[t]{0.15\textwidth}
    \includegraphics[width=\linewidth]{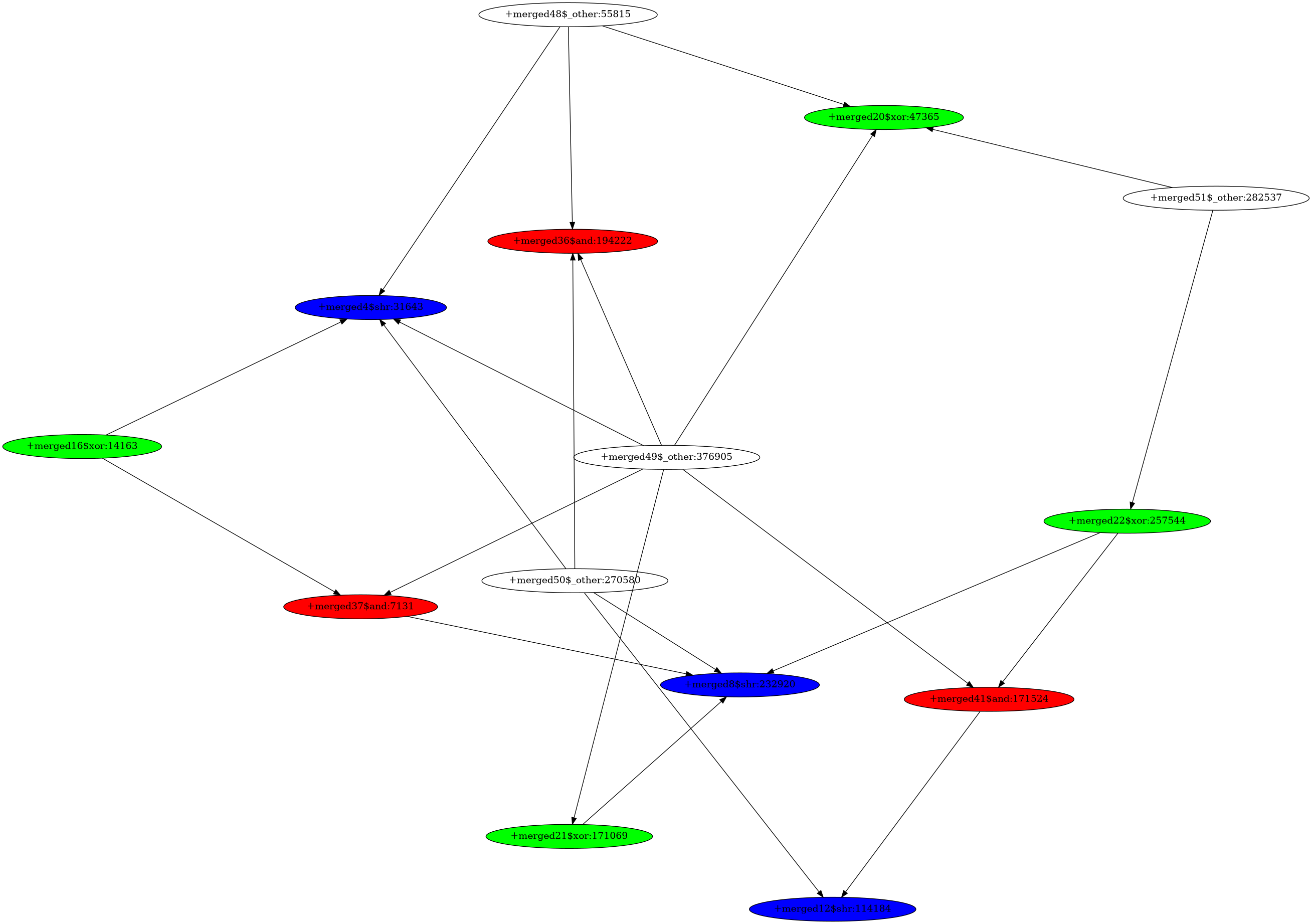}
    \caption{zny}
    \label{figure:zny}
\end{subfigure}\hspace{\fill}
\begin{subfigure}[t]{0.15\textwidth}
    \includegraphics[width=\linewidth]{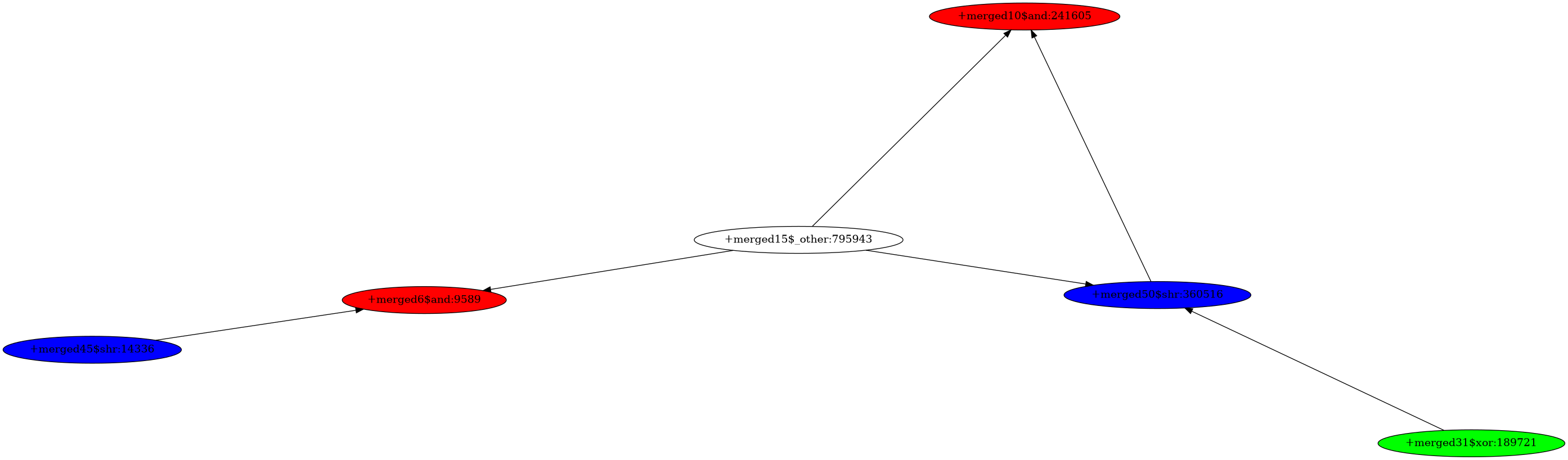}
    \caption{cn}
    \label{figure:cn2}
\end{subfigure}\hspace{\fill}
\begin{subfigure}[t]{0.15\textwidth}
    \includegraphics[width=\linewidth]{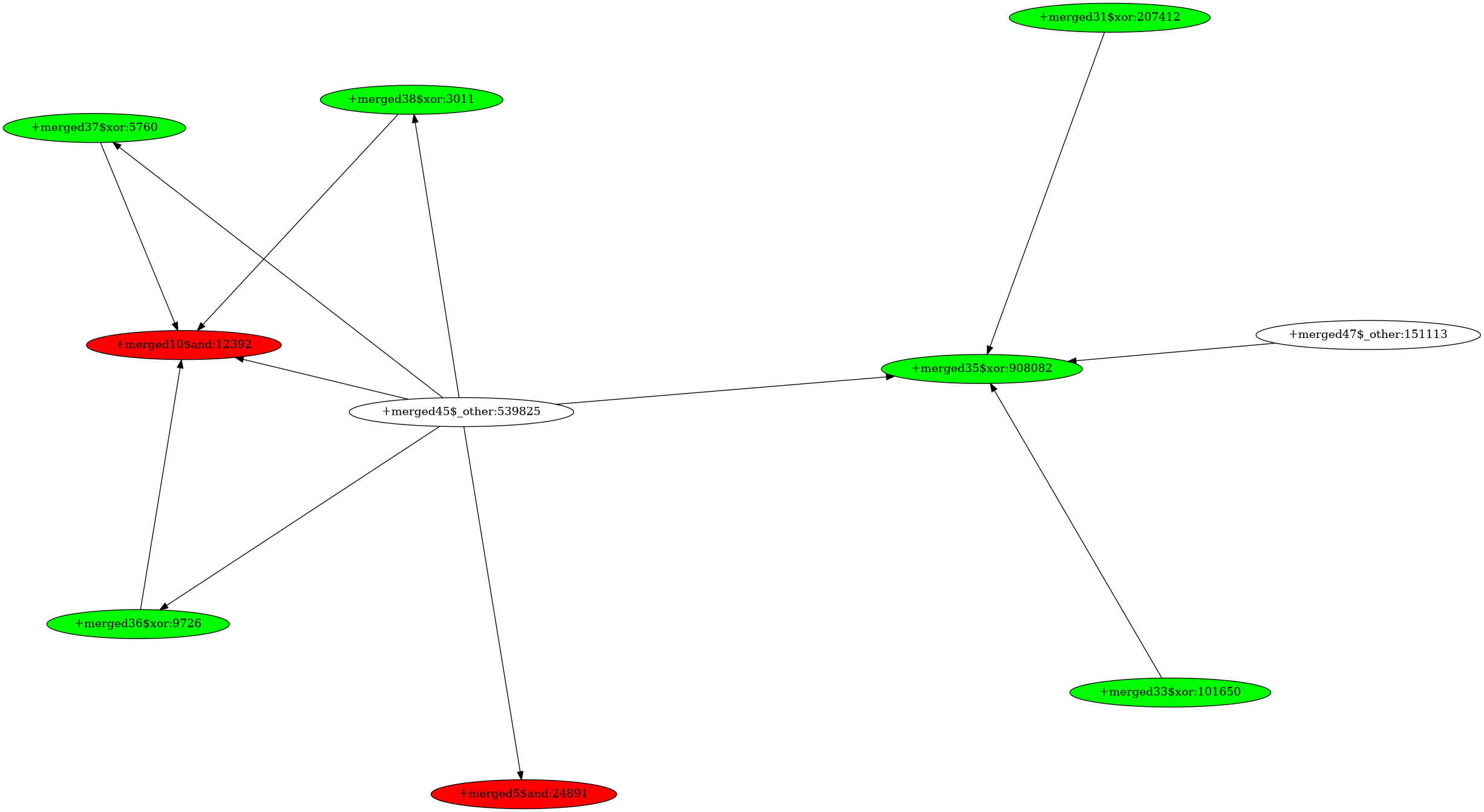}
    \caption{wmp}
    \label{figure:wmp}
\end{subfigure}\hspace{\fill}
\begin{subfigure}[t]{0.15\textwidth}
    \includegraphics[width=\linewidth]{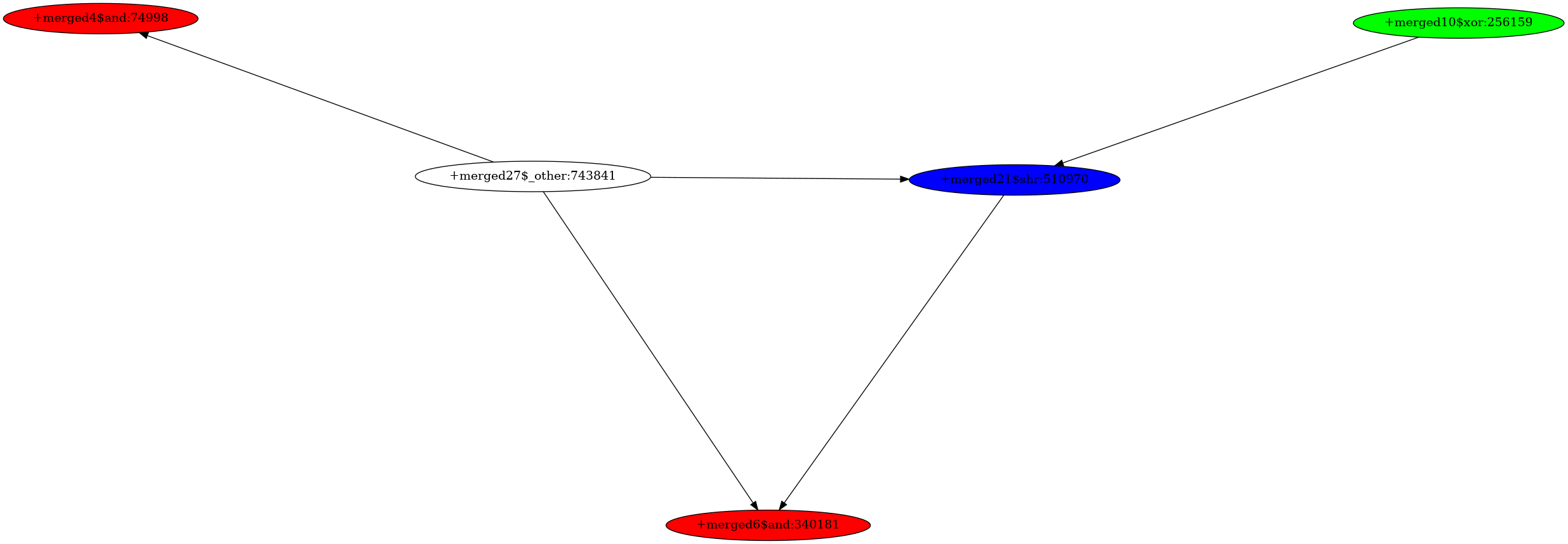}
    \caption{xmr}
    \label{figure:xmr}
\end{subfigure}
\bigskip
\\
\begin{subfigure}[t]{0.15\textwidth}
    \includegraphics[width=\linewidth]{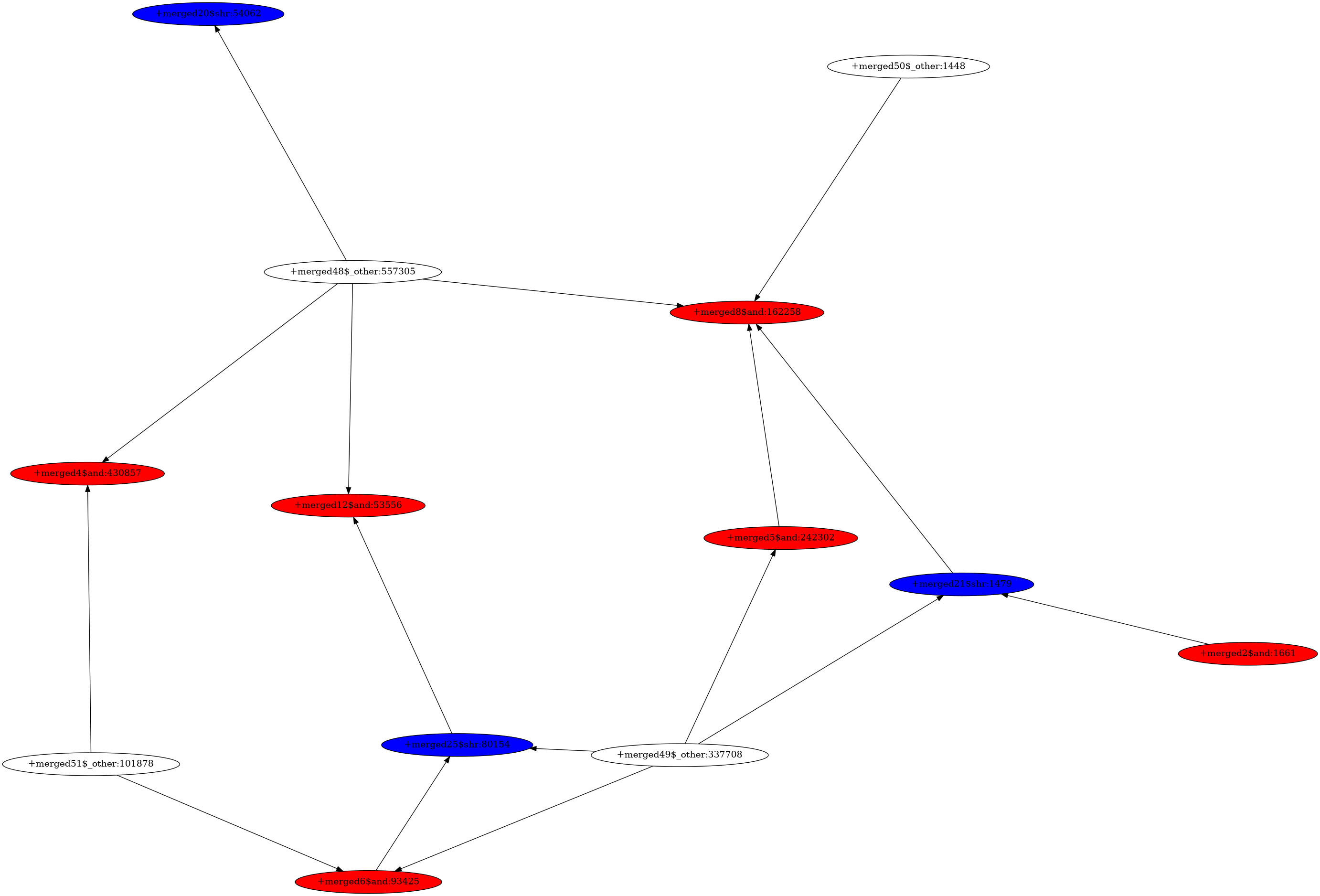}
    \caption{boa}
    \label{figure:boa}
\end{subfigure}\hspace{\fill}
\begin{subfigure}[t]{0.15\textwidth}
    \includegraphics[width=\linewidth]{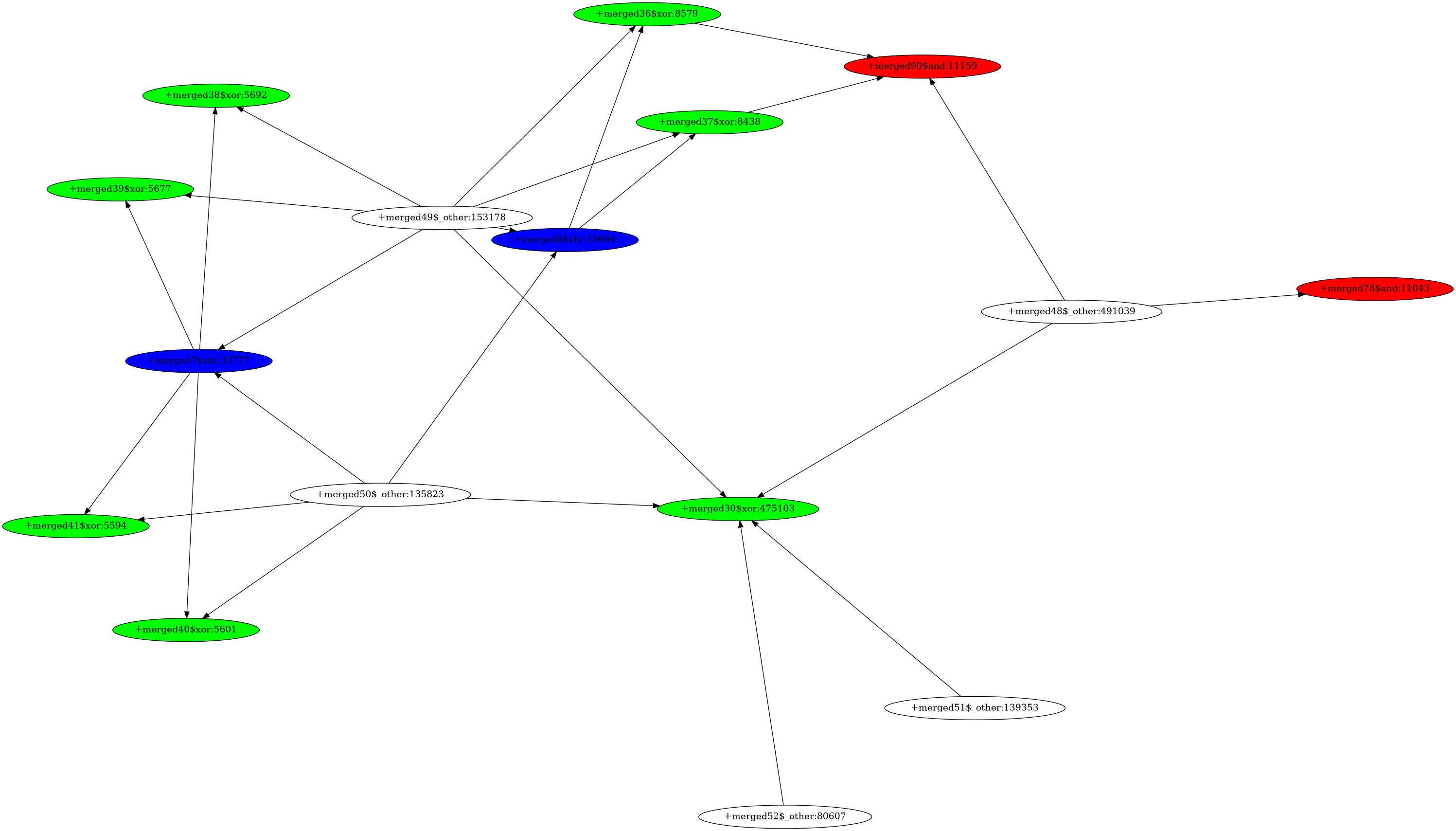}
    \caption{bullet}
    \label{figure:bullet}
\end{subfigure}\hspace{\fill}
\begin{subfigure}[t]{0.15\textwidth}
    \includegraphics[width=\linewidth]{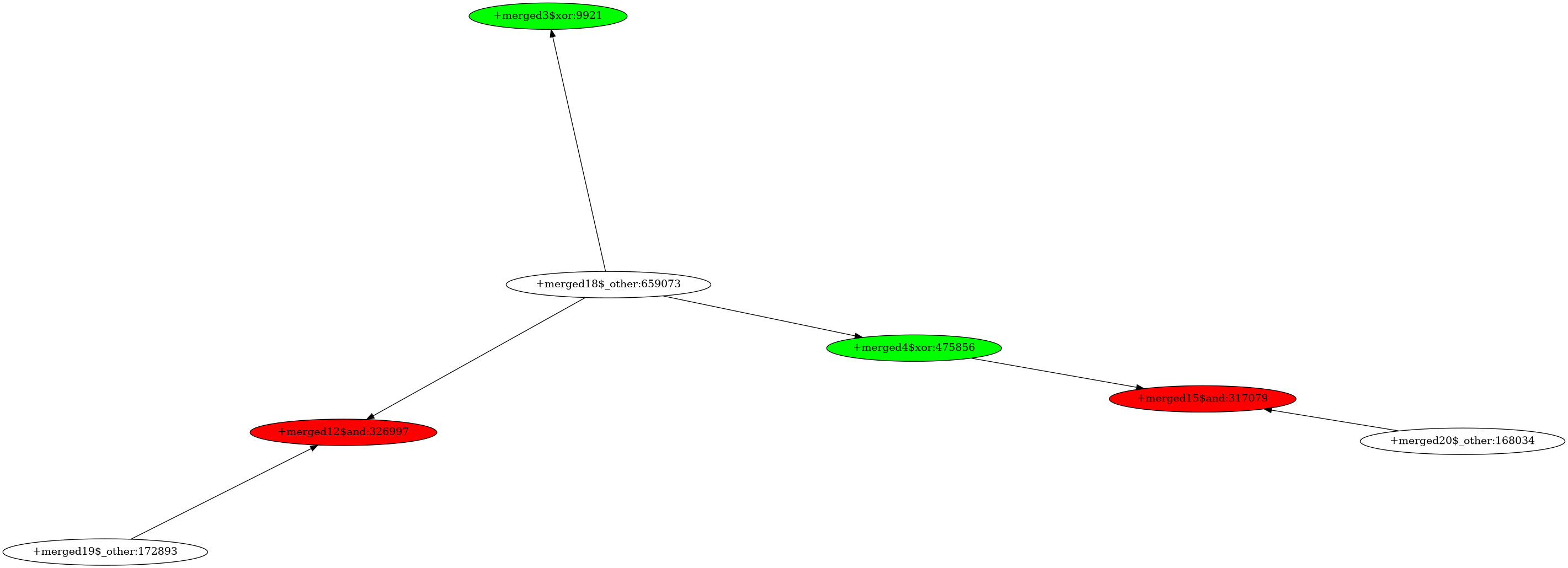}
    \caption{chocolatekeen}
    \label{figure:chocolatekeen}
\end{subfigure}\hspace{\fill}
\begin{subfigure}[t]{0.15\textwidth}
    \includegraphics[width=\linewidth]{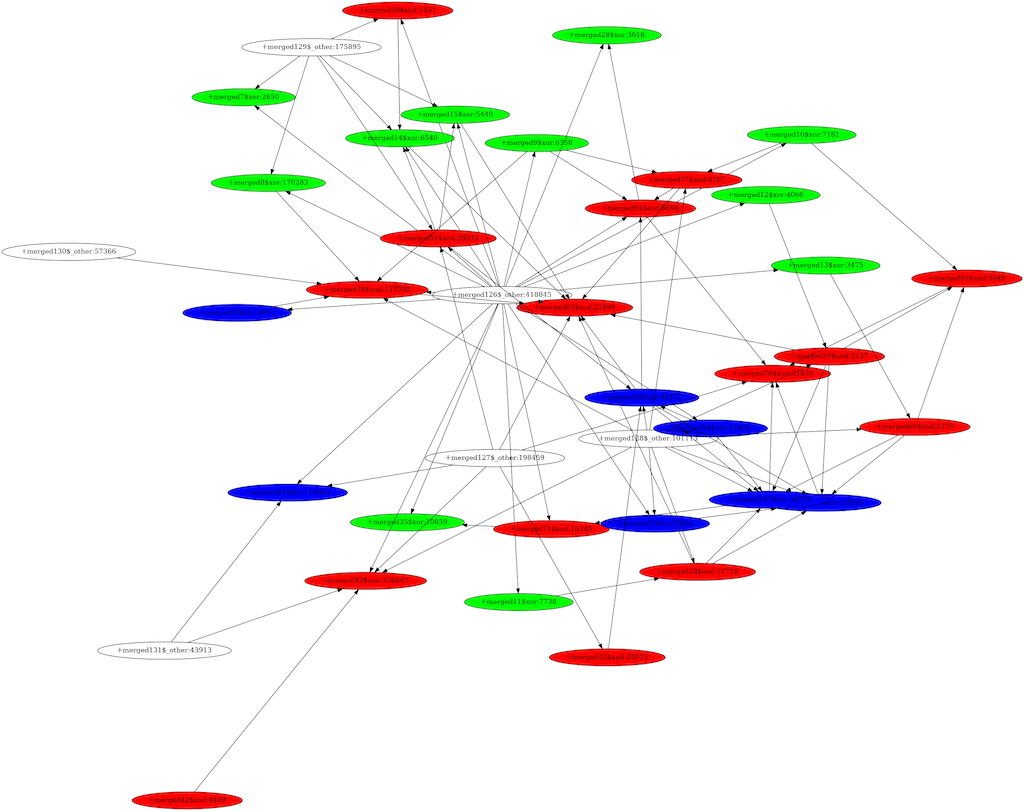}
    \caption{ffmpeg}
    \label{figure:ffmpeg}
\end{subfigure}\hspace{\fill}
\begin{subfigure}[t]{0.15\textwidth}
    \includegraphics[width=\linewidth]{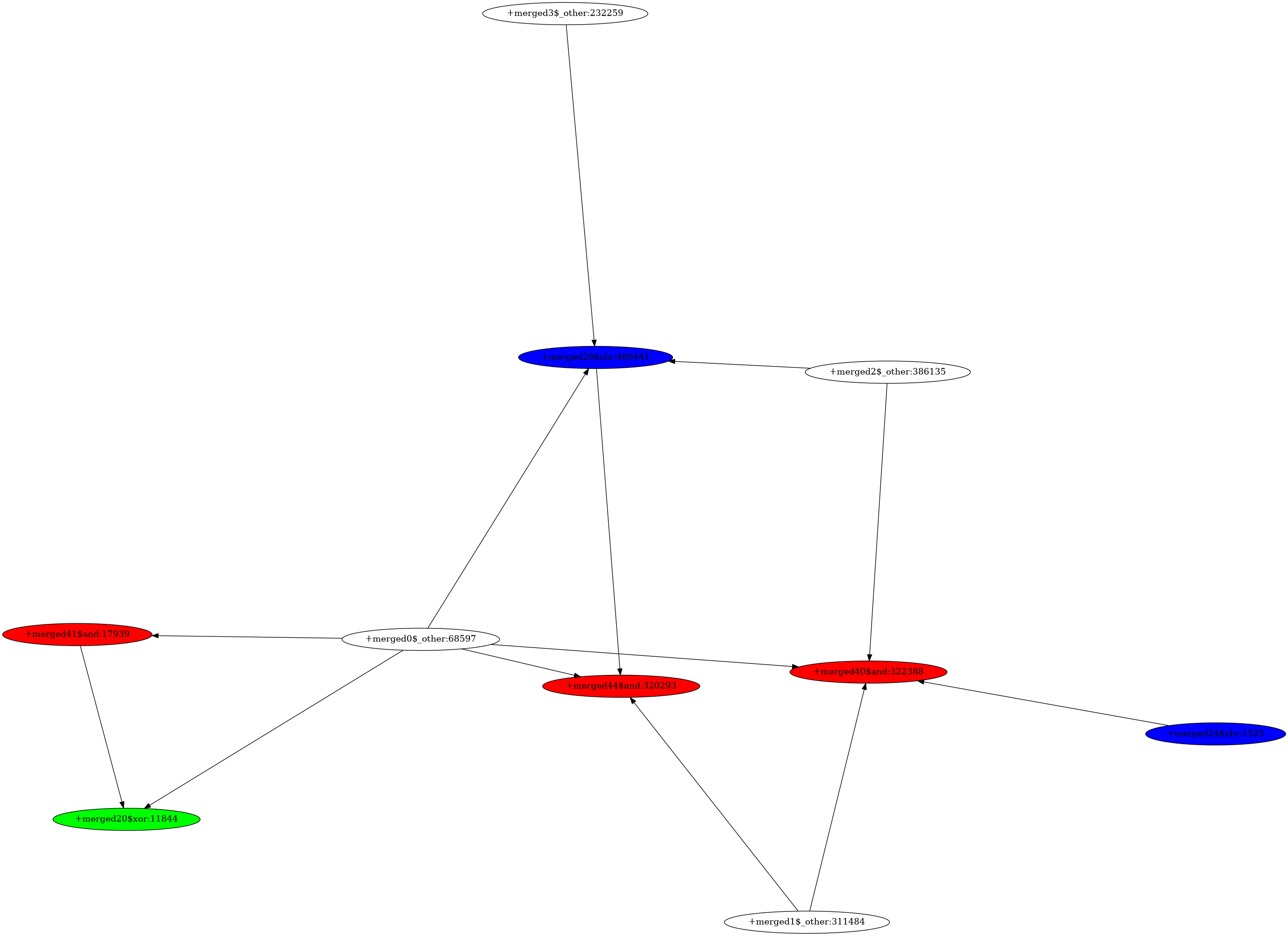}
    \caption{sandspiel}
    \label{figure:sandspiel}
\end{subfigure}\hspace{\fill}
\begin{subfigure}[t]{0.15\textwidth}
    \includegraphics[width=\linewidth]{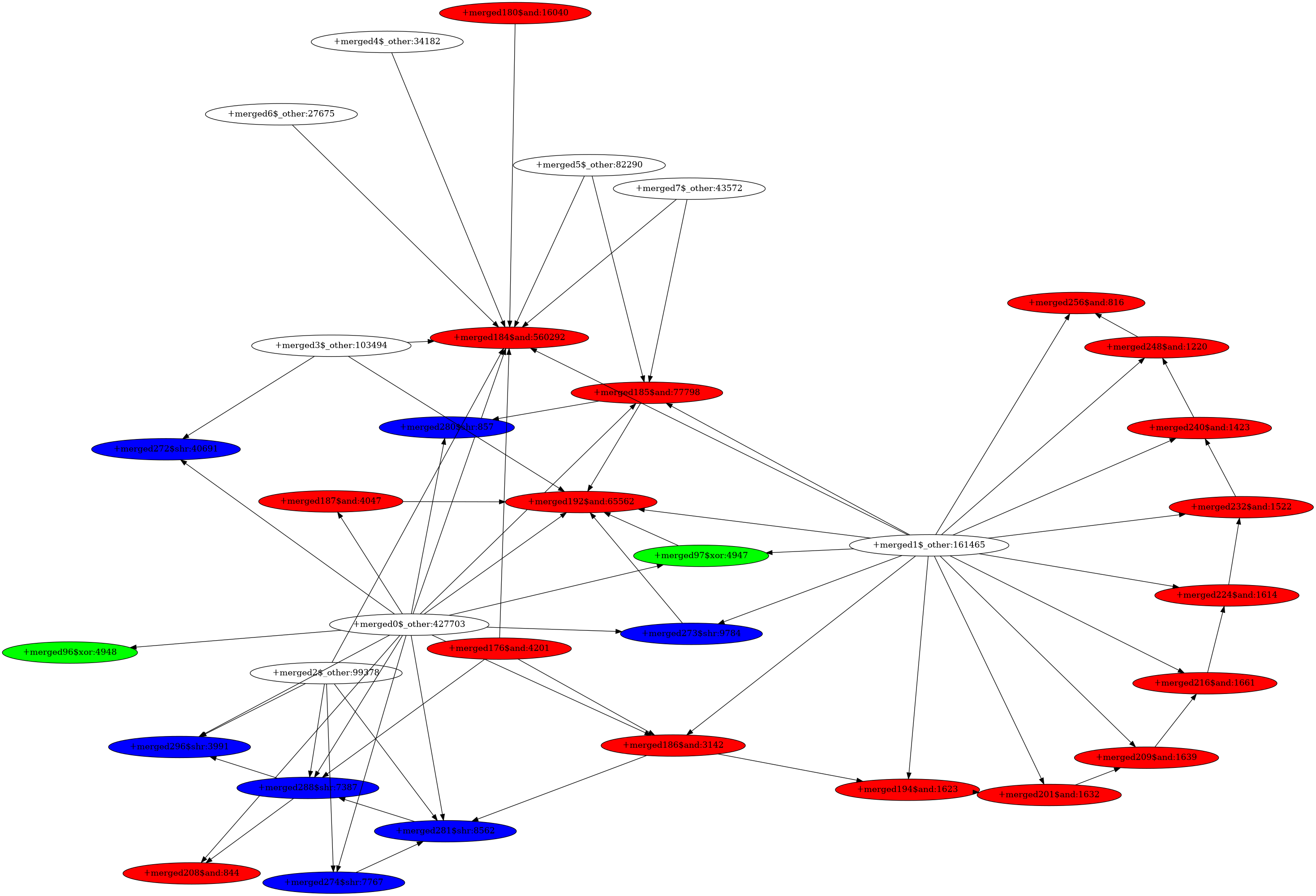}
    \caption{sqlgui}
    \label{figure:sqlgui}
\end{subfigure}
\caption{\textbf{(Simplified Graphs of Miners and Non-miners)} The images show the simplified graphs of the cryptominers in the first row and 6 real-world web applications in the second row. The details of these samples are listed in \autoref{table:minersdesc} and \autoref{table:nonminersdesc}. Vertices highlighted red, green, and blue represent \texttt{and}, \texttt{xor}, and \texttt{shr} instructions respectively. While other instructions are not traced, they may still appear as data origin in the graph represented by the uncolored vertices. The corresponding original graphs can be found in \autoref{sec:appendixunsimpex}.}
\label{figure:simplifiedgraphsexample}
\end{figure}

\paragraph*{Quality of Approximation} Although it is difficult to give a general bound for the approximation, we can show the tightness of the bound for small graphs by constructing and enumerating all approximately simplified and exactly simplified graphs. The construction is given by the following definition.

\begin{definition}\label{defineconstruct}
    Let $G_N$ be a directed acyclic graph with maximum vertex depth $N$. Denote $S_i$ the set of all vertices of depth $N-i$ in $G_N$. We construct $G_N$ as follows:
    \begin{itemize}
        \item $S_0$ has a single vertex which contains no outgoing edges.
        \item $S_i$ contains vertices which point to any combination of vertices with higher depth, or equivalently vertices in $\bigcup_{j = 0}^{i - 1} S_j$. It follows that $|S_i| = |P(\bigcup_{j = 0}^{i - 1} S_j)|$ where $P$ denotes the power set. The power set $P(\bigcup_{j = 0}^{i - 1} S_j)$ are the sets of children of each element of $S_i$.
    \end{itemize}
\end{definition}

\autoref{fig:construction} shows the construction of $G_3$. Note that $|S_3| = 2048$ and $|S_4| = 2^{2059}$. The next theorems show that $G_N$ contains all approximately simplified graphs and exactly simplified graphs of maximum depth $N$.

\begin{theorem}\label{theoremsimplifysubseteq}
    Let $H$ be a directed acyclic (exact) simplified graph of maximum depth $N$. Then $H \subseteq G_N$.
\end{theorem}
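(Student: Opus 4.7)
The plan is to build an injective graph homomorphism $\phi: V(H) \to V(G_N)$ by induction on vertex depth, working from depth $N$ down to depth $0$. The construction in \autoref{defineconstruct} places in each layer $S_i$ a unique vertex for every subset of $\bigcup_{j<i} S_j$, so once the higher-depth vertices of $H$ have been mapped into $G_N$, each vertex $v$ of depth $d$ in $H$ will have a canonical target: the unique vertex of $S_{N-d}$ whose children in $G_N$ are precisely $\phi(C(v))$.

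For the base case at depth $N$, I would observe that any vertex at depth $N$ in $H$ has no outgoing edges, since a child would have depth strictly greater than $N$, so its maximal rooted subgraph is a singleton. Because $H$ is exactly simplified, all such singletons at the same depth must already have been merged, leaving at most one vertex at depth $N$; I would map it to the unique vertex of $S_0$. For the inductive step, assuming $\phi$ is already defined, injective, and edge-preserving on all vertices of depth $> d$, I would note that every child of a depth-$d$ vertex lies at depth strictly greater than $d$, so $\phi(C(v)) \subseteq \bigcup_{j < N-d} S_j$ is already determined. I would then set $\phi(v)$ to be the unique vertex of $S_{N-d}$ whose children in $G_N$ equal $\phi(C(v))$; existence and uniqueness follow directly from the construction, and edge preservation at this layer is immediate.

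The main obstacle will be establishing injectivity within a single layer: if $v_1 \neq v_2$ both have depth $d$ in $H$, I must conclude $\phi(v_1) \neq \phi(v_2)$, which by the inductive injectivity on the children reduces to showing $C(v_1) \neq C(v_2)$ as subsets of $V(H)$. Here I would invoke the simplification property. If instead $C(v_1) = C(v_2)$ held, the maximal rooted subgraphs at $v_1$ and at $v_2$ would share every descendant and differ only at the root, so the map swapping $v_1 \leftrightarrow v_2$ and fixing all common descendants would be a graph isomorphism between them, contradicting the assumption that $H$ is a fixed point of the merging process. Hence $C(v_1) \neq C(v_2)$, the induction carries through, and the resulting injective edge-preserving $\phi$ exhibits $H$ as a subgraph of $G_N$.
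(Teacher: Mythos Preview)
Your proof is correct and follows essentially the same layer-by-layer approach as the paper's, which also hinges on the observation that depth-$N$ vertices are singletons (hence unique after simplification) and that children always sit at strictly greater depth. Your treatment is in fact more careful than the paper's terse argument: you make the embedding $\phi$ explicit and verify injectivity within each layer via the simplification property, whereas the paper simply asserts that the two conditions of \autoref{defineconstruct} are met and concludes $H \subseteq G_N$ without spelling out the embedding or the injectivity check.
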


\begin{proof}
    Consider a directed acyclic simplified graph $H$ with maximum vertex depth $N$. Note that all vertices of depth $n$ can only have children of depth at least $n+1$. It follows that at depth $N$, every rooted subgraph is a singleton, thus there can only be one vertex of depth $N$ since $H$ is simplified. Therefore $S_0$ has a single vertex which contains no outgoing edges, satisfying the first condition of $G_N$. Moreover, since every vertex can only have children of greater depth, the second condition of $G_N$ is satisfied. Thus, $H \subseteq G_N$
\end{proof}

\begin{theorem}
    Let $H$ be a directed acyclic approximately simplified graph of maximum depth $N$. Then $H \subseteq G_N$.
\end{theorem}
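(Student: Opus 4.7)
The plan is to mirror the proof of Theorem~\ref{theoremsimplifysubseteq} by verifying that $H$ satisfies the two conditions defining $G_N$ in Definition~\ref{defineconstruct}: namely, that $S_0(H)$ consists of a single vertex with no outgoing edges, and that every vertex at depth $N-i$ has its children contained in $\bigcup_{j<i} S_j(H)$. The second condition is generic and requires none of the simplification hypotheses: for any DAG under the given depth definition, an edge $u \to v$ forces $\text{depth}(v) \geq \text{depth}(u) + 1$, since prepending this edge to a longest path ending at $u$ yields a path ending at $v$ of length one greater. So children of a depth-$(N-i)$ vertex automatically sit at strictly smaller index $j < i$.

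The substance of the theorem lies in showing that $H$ has a unique vertex of depth $N$, which in the exact case followed from the fact that all depth-$N$ rooted subgraphs are singletons and hence pairwise isomorphic. For the approximate case I would replace the isomorphism argument with a direct computation via \autoref{lemma:probability}. Any vertex $v$ of depth $N$ in $H$ has no children (a child would necessarily have depth $> N$, contradicting maximality), so the recurrence in \autoref{lemma:probability} collapses to
$$P(v) = \frac{1}{|V(H)|}.$$
Since $H$ is an approximately simplified graph, meaning a fixed point of the procedure that merges all vertices sharing a common backward random walk probability, any two depth-$N$ vertices would already have been identified. Hence $S_0(H)$ contains exactly one vertex, which by construction has no outgoing edges.

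The main subtlety I anticipate is pinning down precisely what "approximately simplified" means as a property of a graph rather than of a process. The approximation merges vertices based on $P$-equality alone, and vertices of smaller depth that also happen to have no outgoing edges likewise satisfy $P(v) = 1/|V(H)|$. Strictly speaking, this means the unique depth-$N$ vertex of $H$ may simultaneously absorb other childless vertices from lower depths in the fixed point. This is harmless for the inclusion $H \subseteq G_N$, since the two conditions of Definition~\ref{defineconstruct} only require the \emph{structural} presence of a single maximum-depth vertex with no outgoing edges, together with the generic depth-monotonicity of DAG edges established in the first paragraph. Once this point is flagged, both conditions hold and the embedding $H \subseteq G_N$ follows.
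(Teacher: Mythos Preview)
Your proof is correct and follows essentially the same approach as the paper, which likewise argues that depth-$N$ vertices have no children and hence share the same backward random walk probability, forcing them to be merged so that $|S_0|=1$. The subtlety you flag about childless vertices at smaller depths is in fact a non-issue: the approximate simplification procedure (Algorithm~\ref{algo:simplifyapprox}) iterates over depths and merges only within each depth stratum, so cross-depth identification never occurs and your final paragraph can be dropped.
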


\begin{proof}
    The proof is mostly the same as that of \autoref{theoremsimplifysubseteq}. The only difference is that all vertices of depth $N$ must have the same backward random walk probabilities since they have no outgoing edges, thus they must all be merged, and so $|S_0| = 1$, proving the theorem.
\end{proof}

Denote $\mathbb{H}_4$ the set of all subgraphs $H \subseteq G_4$ with $|V(H)| \le 80$. Denote $\mathbb{A}_4$ the set of all approximately simplified graphs $A$ with depth at most $4$ and $|V(A)| \le 80$. Denote $\mathbb{E}_4$ the set of all (exact) simplified graphs $E$ with depth at most $4$ and $|V(E)| \le 80$. By the two theorems above, $\mathbb{A}_4 \subseteq \mathbb{H}_4$ and $\mathbb{E}_4 \subseteq \mathbb{H}_4$. Using \autoref{defineconstruct}, we can construct the elements of $\mathbb{H}_4$. By random sampling $\mathbb{H}_4$, we can show that about $90\%$ of all $H \in \mathbb{H}_4$ are also in $\mathbb{A}_4$. Each element of $\mathbb{H}_4$ also contains an average of $6.7$ isomorphisms between maximal rooted subgraphs. These numbers are computed by inspecting a large random sample of $\mathbb{H}_4$. Intuitively, this means that, for small graphs, the approximate simplification is close to the exact simplification. Conversely, since $90\%$ of all $H \in \mathbb{H}_4$ are in $\mathbb{A}_4$, and $\mathbb{E}_4 \subseteq \mathbb{H}_4$, it is highly likely most exact simplified graphs are fixed points for the approximate algorithm, implying that the approximate simplifying algorithm will arrive at a non-trivial solution. These bounds are relevant since almost all graphs in our experiment satisfy the condition of having depth at most $4$ and at most $80$ vertices.

\begin{figure}
    \centering
    \includegraphics[width=0.5\linewidth]{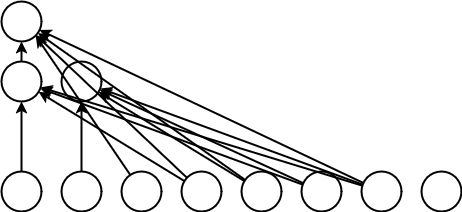}
    \caption{Visualization of $G_3$ in \autoref{defineconstruct}}
    \label{fig:construction}
\end{figure}

\paragraph*{Grouping Vertices} While the true probability of visiting a vertex in a random backward walk can be computed, we chose to give an approximation by performing the random backward walk a large number of times to tolerate noises in the graph. Since this approximation contains slight variations from the true value, we need a method to group vertices which are close in probability. Suppose that we execute the random backward walk $N$ times, the number of times a vertex $v$ is visited is a binomial distribution $M_v \sim B(N,P(v))$. Therefore, we can use known binomial confidence intervals to cluster the frequencies given by the random backward walk. In our implementation, we chose to use simple mean shift clustering \cite{scikitlearnMeanShift} on the frequencies of the vertices to achieve a similar result.

\begin{algorithm}
\caption{Approximate Graph Simplification}\label{algo:simplifyapprox}
\hspace*{\algorithmicindent} \textbf{Input} \texttt{Large graph G}\\
\hspace*{\algorithmicindent} \textbf{Output} \texttt{Simplified graph G}
\begin{algorithmic}[1]
\State \texttt{Do random backward walks to approximate the probability P(v)}
\For{\texttt{d = 0 to max depth in G}}
    \State \texttt{S $\gets$ All vertices of depth d in G}
    \State \texttt{C = \{$S_1, S_2, ..., S_n$\} $\gets$ Cluster S by P(v) using mean shift clustering}
    \State \texttt{G $\gets$ merge all vertices of the same label within the same cluster}
\EndFor
\end{algorithmic}
\end{algorithm}

\subsection{Comparing Graph Fingerprints}

Once we have generated a database of malicious graph fingerprints, we proceed to check whether a sample program contains malicious behavior. Equivalently, the sample program's fingerprint must be checked for traces of any malicious subgraphs. To achieve this, we require a subgraph similarity measure with the following characteristics:
\begin{itemize}
    \item Able to identify the existence of shared local behavior corresponding to short sections of computation.
    \item Tolerates fragmentation and noise in the target super-graph which may appear naturally or due to obfuscation.
    \item Performs well on medium to large graphs.
\end{itemize}
In practical scenarios, there are many sources which may introduce noises to the data-flow graph that we collect. Since we capture only a brief slice of the program, two traces of the same program may capture slightly different parts of the computation. More importantly, various obfuscation techniques can change or remove instructions and flow control, dramatically changing certain parts of the graph. \autoref{figure:cn-all} shows the graphs of the CryptoNight mining algorithm along with three of its obfuscated versions. Large disparities can be seen in the graphs due to insertion, deletion, and fragmentation as a result of the transformations applied to the cryptominer code. The disparities suggest that global similarity measures such as \textit{graph edit distance} are unsuitable for comparing instruction-level data-flow graphs. The ability to tolerate noisy graphs is an important consideration in constructing a robust similarity measure.

\subsubsection{Localized Fragment Similarity}

A natural way to check whether a graph contains the behavior described by another graph is to look for the existence of small substructures of one graph in the other. Intuitively, we are comparing many small fragments in both graphs to search for similar segments. This idea has been employed successfully in data mining and graph databases through indexing methods based on small graph fragments \cite{DBLP:conf/sigmod/ShangLZYW10}, \cite{DBLP:conf/icde/WilliamsHW07}, \cite{DBLP:journals/bioinformatics/TianMSSP07}. Inspecting small fragments gives us a local view of graph similarity, while being tolerant of fragmentation and noise. Moreover, searching for small subgraphs in a large graph can be done efficiently through approximations \cite{DBLP:journals/datamine/BonniciGMMSPG24}, \cite{DBLP:journals/bmcbi/BonniciGPSF13}, \cite{DBLP:journals/tcbb/BonniciG17}. Based on this, we define a simple \textit{n-fragment inclusion score} as a subgraph similarity measure.

\begin{definition}[n-fragment inclusion score]
    The \textbf{n-fragment inclusion score} of a graph $H$ in $G$, denoted $\text{n-FIS}_G(H)$, is the probability that an arbitrary connected subgraph $H' \subseteq H$ containing exactly $n$ edges is a subgraph of $G$.
\end{definition}

\begin{algorithm}[t]
\caption{Approximate n-FIS$_G(H)$}\label{algo:approxnfis}
\hspace*{\algorithmicindent} \textbf{Input} \texttt{H, G, n, k}\\
\hspace*{\algorithmicindent} \textbf{Output} \texttt{approximate n-FIS$_G$(H)}
\begin{algorithmic}[1]
\State \texttt{score $\gets$ 0}
\For{\texttt{i = 1 to k}}
    \State \texttt{S $\gets$ random connected subgraph of H with n edges}
    \If{\texttt{S $\subseteq$ G}}
        \State \texttt{score $\gets$ score + 1}
    \EndIf
\EndFor
\State \texttt{return score/k}
\end{algorithmic}
\end{algorithm}

\paragraph*{Approximation} In order to both avoid enumerating larger graphs and tolerate faulty results in subgraph matching, we approximate the n-FIS score by randomly testing the inclusion of a large number $k$ of connected n-edge subgraphs. \autoref{algo:approxnfis} describes an approximate n-FIS score in detail. For our detection algorithm, we chose based on evaluation $n = 5$ as a compromise between locality and robustness, and $k = 500$ as a good stopping point. In order to test for subgraph inclusion, we use an open source approximate subgraph matching tool, ArcMatch \cite{DBLP:journals/datamine/BonniciGMMSPG24}.

Finally, we test whether a sample is malicious by computing its n-FIS score against the malicious fingerprint database. Given a fingerprint graph $G$ of a sample, we compute the approximate n-FIS$_G(H)$ for each malicious fingerprint $H$ in the database. The existence of at least one score surpassing a threshold implies that the sample contains malicious behavior.

\section{Implementation} \label{section:implementation}

Our implementation of {\ournameshort} focuses on software running through WebAssembly. To generate data-flow graphs, we instrument programs using the open-source WebAssembly dynamic analysis tool Wasabi \cite{DBLP:conf/asplos/LehmannP19}. We use the Wasabi taint analysis framework to generate traces of data flow during execution and filter only three instructions of interest, \verb|and|, \verb|shr|, and \verb|xor|.

The Wasabi analysis framework allows us to insert hooks into the programs and provide call-back functions for every executed instruction. We use these call-backs to record a trace of relevant instructions, their operands, and their results. To trace the data flow of a WebAssembly program which operates as a stack machine, we maintain a shadow stack to track the origin of each operand. Each time an instruction pushes onto the operand stack, the shadow stack stores which instruction the value originated from. Using the information tracked in the shadow stack, we are able to log the flow of data between instructions as edges between the operands and result of each instruction. A visualization of an example execution trace is shown in \autoref{fig:dfgex}. The data-flow trace is collected through the debug console as a list of data-flow edges, and the vertices in the trace are identified by the instruction that pushed the respective value onto the stack. We process this trace into the dynamic single assignment form by separating each variable into distinct generations each time it is written. Finally, we output the graph as a list of edges.

Although {\ournameshort} only instruments three most relevant instructions to cryptomining, we note that the analysis and collection of data-flow traces can be extended to more instructions, which may be relevant when considering other platforms that are more complex than WebAssembly.

{\ournameshort} performs the following process to detect cryptomining. Given pre-processed data-flow graphs $G$ of a sample and $H$ of a known miner, they are simplified into fingerprint graphs $G'$ and $H'$ using \autoref{algo:simplifyapprox}. Next, we compute $5$-FIS$_{G'}(H')$ using \autoref{algo:approxnfis} with $k = 500$ iterations. Scores are classified as malicious or benign on the basis of a static threshold. We present the reported scores and the empirical decision boundary in the evaluation section.

\paragraph*{External Libraries}

We use the Wasabi framework to instrument and perform taint analysis on WebAssembly binaries \cite{DBLP:conf/asplos/LehmannP19}. We use the scikit-learn library to compute the mean shift clusters in \autoref{algo:simplifyapprox}. The approximate subgraph matching in \autoref{algo:approxnfis} is performed using the ArcMatch library \cite{DBLP:journals/datamine/BonniciGMMSPG24}.

\section{Evaluation}

In this section, we evaluate the following research questions.
\begin{description}
    \item[RQ1] How effective is \autoref{algo:simplifyapprox} in simplifying large graphs?
    \item[RQ2] How effective is {\ournameshort} in identifying cryptominers in the presence of obfuscation?
\end{description}

\subsection{Experimental Setup}

We ran our experiments on an 8 core AMD Ryzen 6900HX processor with 16gb memory. To evaluate our research questions, we collected a WebAssembly dataset consisting of 29 real-world web applications, 6 open-source cryptominers, and generated 30 obfuscated cryptominer samples through different obfuscations. We instrument Wasm binaries with a fork of the Wasabi framework developed by the authors of Wasm-R3 \cite{DBLP:journals/pacmpl/BaekGS0TRP24} at the commit version \verb|6836ccd|.

\paragraph*{Cryptominers}

We collected a sample of six open source \texttt{C} and \texttt{C++} miners from GitHub by searching for keywords using GitHub's search engine, e.g. ``wasm'', ``webassembly'', and ``cryptominer''. These searches returned 6 results after filtering out those which do not provide \texttt{C} or \texttt{C++} source code and those which do not compile. We could not use binary samples in previous works due to the lack of functional boilerplate code to execute the WebAssembly. Moreover, the obfuscators that we use in the experiment require \texttt{C} and \texttt{C++} source code for miners, which are detailed in the next section. The summary of these miners are detailed in \autoref{table:minersdesc}.

\begin{table}[t]
\caption{List of WebAssembly cryptominer samples used as evaluation targets.}
\centering
\scriptsize
\begin{tabular}{llll}
\hline
\textbf{Name} & \textbf{URL} & \textbf{Currency} & \textbf{Algorithm} \\ \hline
btc  & \url{https://github.com/kinshukdua/cryptominer}           & Bitcoin               & sha2            \\
eth  & \url{https://github.com/Rachel-Hu/wasm-miner}             & Ethereum              & ethash          \\
zny  & \url{https://github.com/ohac/cpuminer}                    & Bitzeny               & yescrypt        \\
cn   & \url{https://github.com/andrehrferreira/cryptonight-hash} & any CryptoNight coins & CryptoNight v1  \\
wmp  & \url{https://github.com/notgiven688/webminerpool}         & any CryptoNight coins & CryptoNight v4  \\
xmr  & \url{https://github.com/jtgrassie/xmr-wasm}               & Monero                & CryptoNight v1 \\ \hline
\end{tabular}
\label{table:minersdesc}
\end{table}

\begin{table}[t]
\centering
\caption{List of real-world non-miner WebAssembly web applications used in the evaluation.}
\begin{adjustbox}{max width=\textwidth}
\begin{tabular}{lll}
\hline
\textbf{Name} & \textbf{URL} & \textbf{Domain} \\ \hline
boa            & \url{https://boajs.dev/boa/playground}                                                & Programming  \\
bullet         & \url{https://magnum.graphics/showcase/bullet}                                         & Simulator    \\
chocolatekeen  & \url{https://www.jamesfmackenzie.com/chocolatekeen}                                   & Video game   \\
factorial      & \url{https://www.hellorust.com/demos/factorial/index.html}                            & Mathematics  \\
ffmpeg         & \url{https://w3reality.github.io/async-thread-worker/examples/wasm-ffmpeg/index.html} & Media        \\
figma          & \url{https://www.figma.com}                                                           & Graphics     \\
filament       & \url{https://google.github.io/filament/webgl/demo_suzanne.html}                       & Graphics     \\
funkykarts     & \url{https://www.funkykarts.rocks/demo.html}                                          & Video game   \\
hydro          & \url{https://cselab.github.io/aphros/wasm/hydro.html}                                 & Simulator    \\
imageconvolute & \url{https://takahirox.github.io/WebAssembly-benchmark/tests/imageConvolute.html}     & Benchmark    \\
jqkungfu       & \url{http://jqkungfu.com}                                                             & Programming  \\
jsc            & \url{https://mbbill.github.io/JSC.js/demo/index.html}                                 & Programming  \\
mandelbrot     & \url{http://whealy.com/Rust/mandelbrot.html}                                          & Graphics     \\
ogv-opus       & \url{https://brooke.vibber.net/misc/ogv.js/demo}                                      & Media        \\
ogv-vp9        & \url{https://brooke.vibber.net/misc/ogv.js/demo}                                      & Media        \\
onnx           & \url{https://microsoft.github.io/onnxjs-demo/#}                                       & ML           \\
pacalc         & \url{http://whealy.com/acoustics/PA_Calculator/index.html}                            & Mathematics  \\
parquet        & \url{https://google.github.io/filament/webgl/parquet.html}                            & Graphics     \\
rfxgen         & \url{https://raylibtech.itch.io/rfxgen}                                               & Utility      \\
rguiicons      & \url{https://raylibtech.itch.io/rguiicons}                                            & Utility      \\
rguilayout     & \url{https://raylibtech.itch.io/rguilayout}                                           & Utility      \\
rguistyler     & \url{https://raylibtech.itch.io/rguistyler}                                           & Utility      \\
riconpacker    & \url{https://raylibtech.itch.io/riconpacker}                                          & Utility      \\
rtexpacker     & \url{https://raylibtech.itch.io/rtexpacker}                                           & Utility      \\
rtexviewer     & \url{https://raylibtech.itch.io/rtexviewer}                                           & Utility      \\
sandspiel      & \url{https://sandspiel.club}                                                          & Video game   \\
sqlgui         & \url{http://kripken.github.io/sql.js/examples/GUI}                                    & Programming  \\
sqlpractice    & \url{https://www.sql-practice.com}                                                    & Programming  \\
wasm-astar     & \url{https://jacobdeichert.github.io/wasm-astar}                                      & Benchmark \\ \hline
\end{tabular}
\end{adjustbox}
\label{table:nonminersdesc}
\end{table}

\paragraph*{Obfuscation Methods}

To evaluate effectiveness against obfuscation, we found three publicly available obfuscators supporting WebAssembly which were used in previous works on WebAssembly obfuscation \cite{CABRERAARTEAGA2023103296}, \cite{harnes2024cryptic}: \texttt{Tigress} \cite{tigressHome}, \texttt{emcc-obf} \cite{harnes2024cryptic}, and \texttt{wasm-mutuate} \cite{CABRERAARTEAGA2023103296}. We found that wasm-mutate generates buggy binaries that crash at runtime due to being outdated, and many Tigress obfuscations do not work when applied alongside Wasabi instrumentation. We made the best attempt at applying the following obfuscations to as many of our sample miners as possible.

\begin{itemize}
    \item \textbf{Tigress.} \cite{tigressHome} (Version 4.0.10) Tigress is a source-to-source C obfuscator that has been shown to be effective in software protection \cite{10.1007/978-981-33-4706-9_1} and evading cryptominer detection \cite{harnes2024cryptic}. We apply the following tigress obfuscations to three of the miners which successfully ran with Tigress and Wasabi instrumentation applied.
        \begin{itemize}
            \item \textbf{Encode Arithmetic} - Replace integer arithmetic with more complex expressions, encoded with Mixed Boolean Expressions
            \item \textbf{Function Splitting and Flattening} - Splits functions into smaller fragments and removes structured control flow.
        \end{itemize}
    \item \textbf{emcc-obf.} \cite{harnes2024cryptic} emcc-obf is an obfuscator based on Obfuscator-LLVM \cite{ieeespro2015-JunodRWM} and the Hikari project \cite{githubGitHubHikariObfuscatorHikari}. It is implemented as middle-end passes in the LLVM toolchain. It was demonstrated to be effective in preventing reverse engineering \cite{10.1145/3129676.3129708} and cryptominer detection \cite{harnes2024cryptic}. We chose the following four obfuscations which have been shown in \cite{harnes2024cryptic} to be effective against cryptominer detection.
        \begin{itemize}
            \item \textbf{Bogus Control flow} - Inserting spurious basic blocks and conditional jumps with opaque predicates.
            \item \textbf{Control Flow Flattening} - Removes structured control flow. Similar to Tigress flattening obfuscation.
            \item \textbf{Basic Block Splitting} - Splits LLVM basic blocks into multiple blocks.
            \item \textbf{Substitute Instructions} - Replaces arithmetic and boolean expressions with equivalent but more complex expressions, similarly to Tigress encode arithmetic obfuscation.
        \end{itemize}
\end{itemize}

Some of these obfuscation significantly changes the structure of the data-flow graph as shown in \autoref{figure:cn-all}. Therefore, it is challenging to compare the graphs, and traditional metrics such as graph edit distance are ineffective. Our results show that the \textit{n-fragment inclusion score} is able to effectively compare these graphs.

\paragraph*{Non-miners}

To test the ability to differentiate between mining and non-mining behavior, we evaluated our method against a sample of real-world WebAssembly web applications presented by Baek et al. \cite{DBLP:journals/pacmpl/BaekGS0TRP24}. This data set contains a wide variety of web applications from the \textit{Made with WebAssembly} list \cite{madewithwebassemblyMadeWith}. We exclude samples which do not run for various reasons, e.g. Wasm binaries that contain unsupported extensions for Wasabi, runtime errors with Wasabi, web apps that do not generate a graph, and dead links. In total, we tested 29 real-world Wasm web applications.

To clarify the differences between our benchmarks and Wasm-R3, we included six more programs (filament, imageconvolute, ogv-opus, ogv-vp9, onnx, sqlpractice) and excluded four programs (fib, game-of-life, multiplydouble, multiplyint). Note that ``pathfinding'' and ``guiicons'' in wasm-r3 are named ``wasm-astar'' and ``rguiicons'' in our table. To account for these differences, we tested all the evaluation targets presented in the Wasm-R3 paper, including those that failed their experiments. The programs that failed their evaluation but functioned properly for our Wasabi analysis are imageconvolute, onnx, sqlpractice, and ogv (which contains ogv-opus and ogv-vp9 for audio and video decoding demos). We added ``filament'' to our experiments due to the lack of variety of graphics benchmarks in Wasm-R3 and the importance of graphics computations. Filament is a popular open-source graphics engine developed by Google and has over 18,000 stars on GitHub. Finally, we excluded the four programs because they produced empty data flow graphs with our Wasabi analysis, and it is difficult to ensure whether this is due to a Wasabi bug or the nature of the program itself. They would be trivially classified as non-miners regardless. We note that we used a fork of the Wasabi framework by the Wasm-R3 authors since it is already an improvement over upstream in terms of compatibility. The details of our samples are presented in \autoref{table:nonminersdesc}. For each website, we manually instrument Wasm binaries and inject them back into the website using Google Chrome developer tools.

\begin{table}[t]
    \centering
    \caption{A table of graph sizes for the all benchmark samples. The number of vertices and edges before and after reduction are given, as well as the percent reduction.}
    \tiny
    \begin{tabular}{l|cccccc}
    \hline
        sample & |V| & |E| & |V'| & |E'| & -|V|\% & -|E|\% \\ \hline
        btc & 119 & 202 & 18 & 27 & -84.9\% & -86.6\% \\ 
        zny & 1034 & 2002 & 14 & 22 & -98.6\% & -98.9\% \\ 
        cn & 1326 & 2002 & 8 & 9 & -99.4\% & -99.6\% \\ 
        eth & 1207 & 2002 & 114 & 276 & -90.6\% & -86.2\% \\ 
        wmp & 1045 & 2002 & 11 & 15 & -98.9\% & -99.3\% \\ 
        xmr & 1324 & 2002 & 5 & 5 & -99.6\% & -99.8\% \\ 
        btc-emccobf-boguscf & 1037 & 2002 & 15 & 21 & -98.6\% & -99.0\% \\ 
        btc-emccobf-flatten & 1033 & 2002 & 15 & 22 & -98.5\% & -98.9\% \\ 
        btc-emccobf-split & 1031 & 2002 & 14 & 20 & -98.6\% & -99.0\% \\ 
        btc-emccobf-substitute & 1065 & 2002 & 39 & 80 & -96.3\% & -96.0\% \\ 
        btc-tigress-encodearith & 1063 & 2002 & 21 & 39 & -98.0\% & -98.1\% \\ 
        btc-tigress-splitflatten & 1018 & 2002 & 13 & 18 & -98.7\% & -99.1\% \\ 
        zny-emccobf-boguscf & 1111 & 2002 & 39 & 77 & -96.5\% & -96.2\% \\ 
        zny-emccobf-flatten & 1022 & 2002 & 65 & 139 & -93.6\% & -93.1\% \\ 
        zny-emccobf-split & 1018 & 2002 & 12 & 17 & -98.8\% & -99.2\% \\ 
        zny-emccobf-substitute & 1067 & 2002 & 22 & 56 & -97.9\% & -97.2\% \\ 
        zny-tigress-encodearith & 1066 & 2002 & 16 & 42 & -98.5\% & -97.9\% \\ 
        zny-tigress-splitflatten & 1021 & 2002 & 71 & 153 & -93.0\% & -92.4\% \\ 
        cn-emccobf-boguscf & 1336 & 2002 & 7 & 7 & -99.5\% & -99.7\% \\ 
        cn-emccobf-flatten & 1325 & 2002 & 6 & 6 & -99.5\% & -99.7\% \\ 
        cn-emccobf-split & 1326 & 2002 & 8 & 9 & -99.4\% & -99.6\% \\ 
        cn-emccobf-substitute & 2191 & 2002 & 98 & 264 & -95.5\% & -86.8\% \\ 
        cn-tigress-encodearith & 1252 & 2002 & 9 & 12 & -99.3\% & -99.4\% \\ 
        cn-tigress-splitflatten & 1283 & 2002 & 7 & 10 & -99.5\% & -99.5\% \\ 
        eth-emccobf-boguscf & 1212 & 2002 & 117 & 295 & -90.3\% & -85.3\% \\ 
        eth-emccobf-flatten & 1217 & 2002 & 119 & 316 & -90.2\% & -84.2\% \\ 
        eth-emccobf-split & 1212 & 2002 & 101 & 242 & -91.7\% & -87.9\% \\ 
        eth-emccobf-substitute & 1505 & 2002 & 100 & 417 & -93.4\% & -79.2\% \\ 
        wmp-emccobf-boguscf & 1049 & 2002 & 11 & 12 & -99.0\% & -99.4\% \\ 
        wmp-emccobf-flatten & 1046 & 2002 & 11 & 15 & -98.9\% & -99.3\% \\ 
        wmp-emccobf-split & 1039 & 2002 & 6 & 5 & -99.4\% & -99.8\% \\ 
        wmp-emccobf-substitute & 1044 & 2002 & 12 & 16 & -98.9\% & -99.2\% \\ 
        xmr-emccobf-boguscf & 1296 & 2002 & 16 & 30 & -98.8\% & -98.5\% \\ 
        xmr-emccobf-flatten & 1325 & 2002 & 7 & 8 & -99.5\% & -99.6\% \\ 
        xmr-emccobf-split & 1324 & 2002 & 5 & 5 & -99.6\% & -99.8\% \\ 
        xmr-emccobf-substitute & 2194 & 2002 & 101 & 301 & -95.4\% & -85.0\% \\ 
        boa & 1032 & 2002 & 13 & 16 & -98.7\% & -99.2\% \\ 
        bullet & 179 & 202 & 16 & 25 & -91.1\% & -87.6\% \\ 
        chocolatekeen & 104 & 202 & 7 & 6 & -93.3\% & -97.0\% \\ 
        factorial & 1030 & 2002 & 17 & 27 & -98.3\% & -98.7\% \\ 
        ffmpeg & 1110 & 1972 & 36 & 87 & -96.8\% & -95.6\% \\ 
        figma & 652 & 1270 & 12 & 19 & -98.2\% & -98.5\% \\ 
        filament & 1046 & 2002 & 9 & 15 & -99.1\% & -99.3\% \\ 
        funkykarts & 1006 & 2002 & 7 & 5 & -99.3\% & -99.8\% \\ 
        hydro & 1072 & 1878 & 27 & 63 & -97.5\% & -96.6\% \\ 
        imageconvolute & 1005 & 2002 & 5 & 4 & -99.5\% & -99.8\% \\ 
        jqkungfu & 1005 & 2002 & 5 & 4 & -99.5\% & -99.8\% \\ 
        jsc & 1019 & 2002 & 19 & 29 & -98.1\% & -98.6\% \\ 
        mandelbrot & 1003 & 2002 & 3 & 2 & -99.7\% & -99.9\% \\ 
        ogv-opus & 1126 & 1998 & 21 & 40 & -98.1\% & -98.0\% \\ 
        ogv-vp9 & 1139 & 1888 & 11 & 16 & -99.0\% & -99.2\% \\ 
        onnx & 1008 & 2002 & 3 & 2 & -99.7\% & -99.9\% \\ 
        pacalc & 1090 & 2002 & 130 & 311 & -88.1\% & -84.5\% \\ 
        parquet & 1046 & 2002 & 9 & 15 & -99.1\% & -99.3\% \\ 
        rfxgen & 1032 & 2002 & 11 & 14 & -98.9\% & -99.3\% \\ 
        rguiicons & 1057 & 2002 & 14 & 20 & -98.7\% & -99.0\% \\ 
        rguilayout & 1041 & 2002 & 8 & 9 & -99.2\% & -99.6\% \\ 
        rguistyler & 1041 & 2002 & 7 & 7 & -99.3\% & -99.7\% \\ 
        riconpacker & 1030 & 2002 & 22 & 52 & -97.9\% & -97.4\% \\ 
        rtexpacker & 1062 & 2002 & 11 & 16 & -99.0\% & -99.2\% \\ 
        rtexviewer & 1061 & 2000 & 9 & 12 & -99.2\% & -99.4\% \\ 
        sandspiel & 1014 & 2002 & 10 & 13 & -99.0\% & -99.4\% \\ 
        sqlgui & 1205 & 1840 & 36 & 69 & -97.0\% & -96.2\% \\ 
        sqlpractice & 884 & 1712 & 8 & 8 & -99.1\% & -99.5\% \\ 
        wasm-astar & 1095 & 2002 & 34 & 67 & -96.9\% & -96.7\% \\ 
        \hline
    \end{tabular}
    \label{table:reduction}
\end{table}

\paragraph*{Baseline Comparisons}

We compare our work with three previous publications which represents the state-of-the-art in cryptominer detection: MINOS \cite{naseem2021minos}, Minesweeper \cite{10.1145/3243734.3243858}, and WASim \cite{romano2020wasim}. Among recent works on cryptominer detection, these were the few with working and publicly available artifacts.

\noindent\\
\textbf{MINOS.} MINOS detects cryptomining using a convolutional neural network classifier on gray-scale image representations of WebAssembly binaries \cite{naseem2021minos}. The authors of MINOS reported high detection and low false positive rates, though it has been shown that binary diversification can effectively evade MINOS \cite{CABRERAARTEAGA2023103296}. We use a re-implementation of MINOS by Cabrera-Arteaga et al. \cite{CABRERAARTEAGA2023103296} which has been re-trained in Cryptic Bytes \cite{harnes2024cryptic} using a larger dataset.

\noindent\\
\textbf{Minesweeper.} Minesweeper detects cryptominer binaries using a set of heuristics based on static and intrinsic features of cryptomining code \cite{10.1145/3243734.3243858}. The authors created a set of fingerprints for cryptographic functions by counting cryptography-related instructions and operations. WASM binaries are classified as either being a general cryptominer based on the overall instruction count, or as a CryptoNight miner based on the existence of specific cryptographic primitives, including the following cryptographic functions: Keccak (Keccak 1600-516 and Keccak-f 1600), AES, BLAKE-256, Groestl-256, and Skein-256. These cryptographic primitives are also identified by the distribution of their instruction types.

\noindent\\
\textbf{WASim.} \cite{romano2020wasim} WASim proposed a classification method for WebAssembly binaries using features extracted from Wasm binary files such as function sizes, export types, file attributes, and other metadata. WASim provides four different machine learning models to classify WASM binaries into 11 different categories, including cryptominers. These classifier models include neural network, support vector machine, random forest, and naive Bayes models.

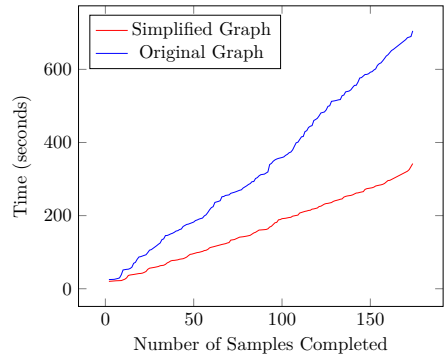
\begin{figure}[t]
    \centering
    \begin{tikzpicture}[scale=0.70]
        \begin{axis}
        [
            xlabel={Number of Samples Completed}, 
            ylabel={Time (seconds)}, 
            legend pos=north west
        ]
        \addplot[red, no marks] table[col sep=comma]{timeseries/nonminers-timeseries.csv};
        \addlegendentry{Simplified Graph}
        \addplot[blue, no marks] table[col sep=comma]{timeseries/unsimplified-nonminers-timeseries.csv};
        \addlegendentry{Original Graph}
        \end{axis}
    \end{tikzpicture}
    \caption{\textbf{(Ablation Study of Analysis Time with and without Simplification)} Comparison of analysis times with and without graph simplification is presented. For the original graphs, this is the time to compute \autoref{algo:approxnfis} (n-FIS score). For simplified graphs, the time includes both \autoref{algo:simplifyapprox} and \autoref{algo:approxnfis}.}
    \label{fig:timeabalation}
\end{figure}

\subsection{RQ1: The Effectiveness of Graph Simplification}

To demonstrate the effectiveness of our graph simplification algorithm, \autoref{table:reduction} shows the reduction in the number of vertices and edges in the graphs of our benchmark samples. Across all 65 graphs in the benchmark, we achieve an average of 97.3\% reduction in vertex count and 96.5\% reduction in edge count. The ablation study in \autoref{fig:timeabalation} shows a significant reduction in the computation time for the n-FIS score even though ArcMatch has been shown to be scalable with respect to the number of vertices and edges \cite{DBLP:journals/datamine/BonniciGMMSPG24}.

\autoref{table:minervsminer}a and \autoref{table:minervsminer}b show the n-FIS scores between each pair of cryptominers in the sample with and without graph simplification, respectively. The scores change only slightly for most entries, except for \verb|eth|, which has a much more complex and less reducible graph than the other samples. The larger simplified graph of \verb|eth|, which contains more than 10 times the number of edges and vertices of other samples, would make it much more prone to contain other fingerprints as a subgraph due to its size and complexity. Although the algorithm preserves structural information and connectivity, combining repeated subgraphs means that we lose most information about the frequency of the substructures, yet the scores demonstrate that this structural information is enough to differentiate between different species of miners in our sample. The detection results in \autoref{table:summary} show that the complete framework of {\ournameshort} achieves better accuracy and $f_1$ score compared to {\ournameshort} without graph simplification, demonstrating that this minimal structural information is not only sufficient, but also amplifies the uniqueness of cryptominer data-flow graphs.

\begin{table}[t]
    \centering
    \caption{\textbf{(Simplified and Unsimplified Miners Pairwise n-FIS Scores)} Pairwise comparison of 5-FIS score between the miner samples with and without graph simplification. The miner in each row is checked whether it contains the miner in each column as subgraphs. Scores above 0.5 and 0.65 are highlight in light red and dark red respectively.}
    \scriptsize
    (a) \textbf{Simplified} Miners Pairwise n-FIS Scores\\
    \begin{tabular}{l|ccccccc}
    \hline
        \textbf{miner} & \textbf{btc} & \textbf{zny} & \textbf{cn} & \textbf{eth} & \textbf{wmp} & \textbf{xmr} \\ \hline
        btc & \cellcolor{red!55}0.915 & 0.249 & 0.091 & 0.119 & \cellcolor{red!15}0.57 & 0.209 \\ 
        zny & 0.121 & \cellcolor{red!55}0.929 & 0.163 & 0.057 & 0.275 & 0.207 \\ 
        cn & 0.023 & 0.154 & \cellcolor{red!55}0.817 & 0.037 & 0.035 & \cellcolor{red!55}0.94 \\ 
        eth & \cellcolor{red!55}0.823 & 0.277 & 0.074 & \cellcolor{red!55}0.972 & \cellcolor{red!55}0.916 & 0.171 \\ 
        wmp & 0.233 & 0.098 & 0.03 & 0.027 & \cellcolor{red!55}0.94 & 0.188 \\ 
        xmr & 0.031 & 0.097 & \cellcolor{red!55}0.709 & 0.029 & 0.01 & \cellcolor{red!55}1.0 \\ \hline
    \end{tabular}
    ~\\~\\~\\
    (b) \textbf{Unsimplified} Miners Pairwise n-FIS Scores\\
    \begin{tabular}{l|ccccccc}
    \hline
        \textbf{miner} & \textbf{btc} & \textbf{zny} & \textbf{cn} & \textbf{eth} & \textbf{wmp} & \textbf{xmr} \\ \hline
        btc & \cellcolor{red!55}0.851 & 0.236 & 0.151 & 0.221 & 0.484 & 0.184 \\ 
        zny & 0.251 & \cellcolor{red!55}0.867 & 0.183 & 0.173 & 0.453 & 0.206 \\ 
        cn & 0.13 & 0.36 & \cellcolor{red!55}0.756 & 0.145 & 0.022 & \cellcolor{red!55}0.743 \\ 
        eth & \cellcolor{red!15}0.551 & 0.265 & 0.143 & \cellcolor{red!55}0.943 & \cellcolor{red!15}0.577 & 0.145 \\ 
        wmp & 0.138 & 0.201 & 0.162 & 0.03 & \cellcolor{red!55}0.83 & 0.225 \\ 
        xmr & 0.118 & 0.303 & \cellcolor{red!55}0.855 & 0.161 & 0.011 & \cellcolor{red!55}0.821 \\ \hline
    \end{tabular}
    \label{table:minervsminer}
\end{table}

\subsection{RQ2: The Effectiveness of {\ournameshort}}

To demonstrate the effectiveness of {\ournameshort} and the \textit{n-fragment inclusion score}, we test its ability to distinguish cryptomining behavior in the presence of obfuscation, as well as differentiate between miners and non-miners. \autoref{table:minervsminer}a shows the n-FIS score of the miner samples against themselves. The scores reflect clear differences between the different species of miners. We observe that \verb|cn| and \verb|xmr| both perform the same variant of the \verb|CryptoNight| algorithm and therefore share mutually high inclusion score. Note that the inclusion scores are not necessarily symmetric, as seen with \verb|wmp| and \verb|eth|. Although both algorithms use the \verb|keccak| family of hashing algorithms as a component, they perform additional work to achieve memory-hardness \cite{getmoneroCryptoNightMonero, ethashkernel}. We hypothesize that the additional work performed by \verb|wmp| distinct from \verb|eth| does not include the three instrumented instructions \verb|and|, \verb|xor|, and \verb|shr|, giving the impression that the work performed by \verb|wmp| is a subset of \verb|eth|.

The first six rows of \autoref{table:baselineobf} shows the performance of our baselines in the six miner samples. MINOS and Minesweeper are able to recognize all six samples as cryptominers, while the four models provided by WASim struggle to identify our samples.

\begin{table}[ht]
    \centering
    \caption{\textbf{(Baseline vs.~All Benchmark Samples)} Baseline detection results for miners, obfuscated miners, and non-miners. 1 and 0 represent malign and benign results respectively. The columns show results for MINOS, Minesweeper, WASim neural network, WASim random forest, WASim support vector, and WASim naive bayes classifiers from left to right.}
    \scriptsize
    \begin{tabular}{|l|l|l|c|c|c|c|c|c|}
    \hline
        \textbf{miner} & \textbf{obf} & \textbf{strat} & \textbf{minos} & \textbf{mswp} & \textbf{wsnn} & \textbf{wsrf} & \textbf{wssv} & \textbf{wsnb} \\ \hline
        zny & - & - & \cellcolor{red!55}1 & \cellcolor{red!55}1 & 0 & 0 & 0 & \cellcolor{red!55}1 \\ 
        cn  & - & - & \cellcolor{red!55}1 & \cellcolor{red!55}1 & \cellcolor{red!55}1 & \cellcolor{red!55}1 & 0 & 0 \\ 
        btc & - & - & \cellcolor{red!55}1 & \cellcolor{red!55}1 & 0 & 0 & 0 & \cellcolor{red!55}1 \\ 
        eth & - & - & \cellcolor{red!55}1 & \cellcolor{red!55}1 & 0 & 0 & 0 & \cellcolor{red!55}1 \\ 
        wmp & - & - & \cellcolor{red!55}1 & \cellcolor{red!55}1 & 0 & \cellcolor{red!55}1 & 0 & 0 \\ 
        xmr & - & - & \cellcolor{red!55}1 & \cellcolor{red!55}1 & \cellcolor{red!55}1 & 0 & 0 & 0 \\ 
        btc & emccobf & boguscf & 0 & \cellcolor{red!55}1 & 0 & 0 & 0 & \cellcolor{red!55}1 \\ 
        btc & emccobf & flatten & 0 & \cellcolor{red!55}1 & 0 & 0 & 0 & \cellcolor{red!55}1 \\ 
        btc & emccobf & split & 0 & \cellcolor{red!55}1 & 0 & 0 & 0 & \cellcolor{red!55}1 \\ 
        btc & emccobf & substitute & 0 & \cellcolor{red!55}1 & 0 & 0 & 0 & \cellcolor{red!55}1 \\ 
        btc & tigress & encodearith & 0 & \cellcolor{red!55}1 & 0 & 0 & 0 & \cellcolor{red!55}1 \\ 
        btc & tigress & splitflatten & \cellcolor{red!55}1 & \cellcolor{red!55}1 & 0 & 0 & 0 & \cellcolor{red!55}1 \\ 
        zny & emccobf & boguscf & \cellcolor{red!55}1 & \cellcolor{red!55}1 & 0 & 0 & 0 & \cellcolor{red!55}1 \\ 
        zny & emccobf & flatten & 0 & \cellcolor{red!55}1 & 0 & 0 & 0 & \cellcolor{red!55}1 \\ 
        zny & emccobf & split & 0 & \cellcolor{red!55}1 & 0 & 0 & 0 & \cellcolor{red!55}1 \\ 
        zny & emccobf & substitute & \cellcolor{red!55}1 & \cellcolor{red!55}1 & 0 & 0 & 0 & \cellcolor{red!55}1 \\ 
        zny & tigress & encodearith & \cellcolor{red!55}1 & \cellcolor{red!55}1 & 0 & 0 & 0 & 0 \\ 
        zny & tigress & splitflatten & 0 & \cellcolor{red!55}1 & 0 & 0 & 0 & \cellcolor{red!55}1 \\ 
        cn & tigress & encodearith & \cellcolor{red!55}1 & \cellcolor{red!55}1 & 0 & \cellcolor{red!55}1 & 0 & 0 \\ 
        cn & tigress & splitflatten & 0 & 0 & 0 & 0 & 0 & 0 \\ 
        cn & emccobf & boguscf & \cellcolor{red!55}1 & \cellcolor{red!55}1 & 0 & \cellcolor{red!55}1 & 0 & 0 \\ 
        cn & emccobf & flatten & \cellcolor{red!55}1 & 0 & 0 & \cellcolor{red!55}1 & 0 & 0 \\ 
        cn & emccobf & split & \cellcolor{red!55}1 & \cellcolor{red!55}1 & \cellcolor{red!55}1 & \cellcolor{red!55}1 & 0 & 0 \\ 
        cn & emccobf & substitute & 0 & \cellcolor{red!55}1 & \cellcolor{red!55}1 & \cellcolor{red!55}1 & 0 & 0 \\ 
        eth & emccobf & boguscf & 0 & \cellcolor{red!55}1 & 0 & 0 & 0 & \cellcolor{red!55}1 \\ 
        eth & emccobf & flatten & \cellcolor{red!55}1 & \cellcolor{red!55}1 & 0 & 0 & 0 & \cellcolor{red!55}1 \\ 
        eth & emccobf & split & 0 & \cellcolor{red!55}1 & 0 & 0 & 0 & \cellcolor{red!55}1 \\ 
        eth & emccobf & substitute & \cellcolor{red!55}1 & \cellcolor{red!55}1 & 0 & 0 & 0 & \cellcolor{red!55}1 \\ 
        wmp & emccobf & boguscf & \cellcolor{red!55}1 & \cellcolor{red!55}1 & 0 & \cellcolor{red!55}1 & 0 & 0 \\ 
        wmp & emccobf & flatten & \cellcolor{red!55}1 & \cellcolor{red!55}1 & 0 & \cellcolor{red!55}1 & 0 & \cellcolor{red!55}1 \\ 
        wmp & emccobf & split & \cellcolor{red!55}1 & \cellcolor{red!55}1 & 0 & \cellcolor{red!55}1 & 0 & 0 \\ 
        wmp & emccobf & substitute & \cellcolor{red!55}1 & \cellcolor{red!55}1 & 0 & \cellcolor{red!55}1 & 0 & \cellcolor{red!55}1 \\ 
        xmr & emccobf & boguscf & \cellcolor{red!55}1 & \cellcolor{red!55}1 & 0 & \cellcolor{red!55}1 & 0 & 0 \\ 
        xmr & emccobf & flatten & 0 & \cellcolor{red!55}1 & 0 & 0 & 0 & 0 \\ 
        xmr & emccobf & split & \cellcolor{red!55}1 & \cellcolor{red!55}1 & \cellcolor{red!55}1 & 0 & 0 & 0 \\ 
        xmr & emccobf & substitute & 0 & \cellcolor{red!55}1 & \cellcolor{red!55}1 & 0 & 0 & 0 \\ \hline \multicolumn{1}{l}{} \\ \hline

        \textbf{nonminer} & \textbf{obf} & \textbf{strat} & \textbf{minos} & \textbf{mswp} & \textbf{wsnn} & \textbf{wsrf} & \textbf{wssv} & \textbf{wsnb} \\ \hline
        boa & - & - & 0 & \cellcolor{red!55}1 & 0 & 0 & 0 & \cellcolor{red!55}1 \\ 
        bullet & - & - & 0 & \cellcolor{red!55}1 & 0 & 0 & 0 & \cellcolor{red!55}1 \\ 
        chocolatekeen & - & - & \cellcolor{red!55}1 & \cellcolor{red!55}1 & 0 & 0 & 0 & 0 \\ 
        factorial & - & - & 0 & 0 & 0 & 0 & 0 & 0 \\ 
        ffmpeg & - & - & 0 & \cellcolor{red!55}1 & 0 & 0 & 0 & \cellcolor{red!55}1 \\ 
        figma & - & - & 0 & \cellcolor{red!55}1 & 0 & 0 & 0 & 0 \\ 
        filament & - & - & 0 & \cellcolor{red!55}1 & 0 & 0 & 0 & \cellcolor{red!55}1 \\ 
        funkykarts & - & - & 0 & \cellcolor{red!55}1 & 0 & 0 & 0 & \cellcolor{red!55}1 \\ 
        hydro & - & - & 0 & \cellcolor{red!55}1 & 0 & 0 & 0 & \cellcolor{red!55}1 \\ 
        imageconvolute & - & - & \cellcolor{red!55}1 & \cellcolor{red!55}1 & 0 & 0 & 0 & 0 \\ 
        jqkungfu & - & - & 0 & \cellcolor{red!55}1 & 0 & 0 & 0 & 0 \\ 
        jsc & - & - & 0 & \cellcolor{red!55}1 & 0 & 0 & 0 & \cellcolor{red!55}1 \\ 
        mandelbrot & - & - & \cellcolor{red!55}1 & \cellcolor{red!55}1 & 0 & \cellcolor{red!55}1 & 0 & 0 \\ 
        ogv-opus & - & - & 0 & \cellcolor{red!55}1 & 0 & \cellcolor{red!55}1 & 0 & \cellcolor{red!55}1 \\ 
        ogv-vp9 & - & - & 0 & \cellcolor{red!55}1 & 0 & 0 & 0 & 0 \\ 
        onnx & - & - & 0 & \cellcolor{red!55}1 & \cellcolor{red!55}1 & \cellcolor{red!55}1 & 0 & 0 \\ 
        pacalc & - & - & \cellcolor{red!55}1 & \cellcolor{red!55}1 & 0 & 0 & 0 & 0 \\ 
        parquet & - & - & 0 & \cellcolor{red!55}1 & 0 & 0 & 0 & \cellcolor{red!55}1 \\ 
        rfxgen & - & - & 0 & \cellcolor{red!55}1 & 0 & 0 & 0 & \cellcolor{red!55}1 \\ 
        rguiicons & - & - & 0 & \cellcolor{red!55}1 & 0 & 0 & 0 & \cellcolor{red!55}1 \\ 
        rguilayout & - & - & 0 & \cellcolor{red!55}1 & 0 & 0 & 0 & \cellcolor{red!55}1 \\ 
        rguistyler & - & - & 0 & \cellcolor{red!55}1 & 0 & 0 & 0 & \cellcolor{red!55}1 \\ 
        riconpacker & - & - & 0 & \cellcolor{red!55}1 & 0 & 0 & 0 & \cellcolor{red!55}1 \\ 
        rtexpacker & - & - & 0 & \cellcolor{red!55}1 & 0 & 0 & 0 & \cellcolor{red!55}1 \\ 
        rtexviewer & - & - & 0 & \cellcolor{red!55}1 & 0 & 0 & 0 & \cellcolor{red!55}1 \\ 
        sandspiel & - & - & \cellcolor{red!55}1 & \cellcolor{red!55}1 & 0 & 0 & 0 & \cellcolor{red!55}1 \\ 
        sqlgui & - & - & 0 & \cellcolor{red!55}1 & 0 & 0 & 0 & \cellcolor{red!55}1 \\ 
        sqlpractice & - & - & 0 & \cellcolor{red!55}1 & 0 & 0 & 0 & \cellcolor{red!55}1 \\ 
        wasm-astar & - & - & 0 & \cellcolor{red!55}1 & 0 & 0 & 0 & 0 \\ \hline
    \end{tabular}
    \label{table:baselineobf}
\end{table}

\begin{table}[t]
    \centering
    \caption{\textbf{(Benchmark Samples n-FIS scores)} The 5-FIS scores showing the inclusion of each original miner fingerprint (columns) inside each benchmark sample (rows) is shown. The last column shows the detection result for a 0.65 5-FIS score threshold in any column. Cells above 0.5 and 0.65 scores are highlighted in light red and dark red respectively.}
    \scriptsize
    \begin{tabular}{|l|l|l|c|c|c|c|c|c|c|}
    \hline
        \textbf{miner} & \textbf{obf} & \textbf{strat} & \textbf{btc} & \textbf{zny} & \textbf{cn} & \textbf{eth} & \textbf{wmp} & \textbf{xmr} & $\ge$\textbf{0.65} \\ \hline
        btc & emccobf & boguscf & \cellcolor{red!55}0.903 & 0.267 & 0.081 & 0.072 & \cellcolor{red!15}0.57 & 0.177 & \cellcolor{red!55}1 \\ 
        btc & emccobf & flatten & \cellcolor{red!55}0.695 & 0.301 & 0.092 & 0.082 & \cellcolor{red!15}0.55 & 0.187 & \cellcolor{red!55}1 \\ 
        btc & emccobf & split & \cellcolor{red!55}0.653 & 0.256 & 0.081 & 0.073 & \cellcolor{red!15}0.534 & 0.193 & \cellcolor{red!55}1 \\ 
        btc & emccobf & substitute & \cellcolor{red!55}0.846 & 0.326 & 0.475 & 0.217 & \cellcolor{red!55}0.773 & 0.453 & \cellcolor{red!55}1 \\ 
        btc & tigress & encodearith & \cellcolor{red!55}0.777 & 0.451 & \cellcolor{red!15}0.527 & 0.162 & \cellcolor{red!55}0.733 & \cellcolor{red!15}0.601 & \cellcolor{red!55}1 \\ 
        btc & tigress & splitflatten & \cellcolor{red!55}0.733 & 0.275 & 0.093 & 0.058 & \cellcolor{red!55}0.671 & 0.193 & \cellcolor{red!55}1 \\ 
        zny & emccobf & boguscf & 0.197 & \cellcolor{red!55}0.821 & 0.187 & \cellcolor{red!15}0.613 & \cellcolor{red!15}0.519 & 0.161 & \cellcolor{red!55}1 \\ 
        zny & emccobf & flatten & 0.224 & \cellcolor{red!55}0.783 & 0.211 & \cellcolor{red!15}0.619 & 0.499 & 0.154 & \cellcolor{red!55}1 \\ 
        zny & emccobf & split & 0.145 & \cellcolor{red!55}0.821 & 0.157 & 0.069 & \cellcolor{red!15}0.545 & 0.064 & \cellcolor{red!55}1 \\ 
        zny & emccobf & substitute & 0.38 & \cellcolor{red!55}0.911 & \cellcolor{red!55}0.674 & 0.269 & \cellcolor{red!55}0.896 & \cellcolor{red!55}0.759 & \cellcolor{red!55}1 \\ 
        zny & tigress & encodearith & 0.38 & 0.307 & \cellcolor{red!15}0.511 & 0.059 & \cellcolor{red!55}0.804 & \cellcolor{red!15}0.598 & \cellcolor{red!55}1 \\ 
        zny & tigress & splitflatten & 0.266 & \cellcolor{red!55}0.739 & 0.189 & \cellcolor{red!15}0.631 & \cellcolor{red!55}0.705 & 0.217 & \cellcolor{red!55}1 \\ 
        cn & emccobf & boguscf & 0.03 & 0.106 & \cellcolor{red!15}0.647 & 0.04 & 0.04 & \cellcolor{red!55}0.951 & \cellcolor{red!55}1 \\ 
        cn & emccobf & flatten & 0.031 & 0.092 & \cellcolor{red!55}0.748 & 0.044 & 0.024 & \cellcolor{red!55}1.0 & \cellcolor{red!55}1 \\ 
        cn & emccobf & split & 0.038 & 0.151 & \cellcolor{red!55}0.836 & 0.039 & 0.029 & \cellcolor{red!55}0.945 & \cellcolor{red!55}1 \\ 
        cn & emccobf & substitute & \cellcolor{red!55}0.945 & \cellcolor{red!55}0.777 & \cellcolor{red!55}0.824 & \cellcolor{red!55}0.896 & \cellcolor{red!55}0.877 & \cellcolor{red!55}0.88 & \cellcolor{red!55}1 \\ 
        cn & tigress & encodearith & 0.029 & 0.197 & \cellcolor{red!55}0.792 & 0.063 & 0.087 & \cellcolor{red!55}0.735 & \cellcolor{red!55}1 \\ 
        cn & tigress & splitflatten & 0.027 & 0.181 & \cellcolor{red!55}0.966 & 0.047 & 0.146 & \cellcolor{red!55}0.985 & \cellcolor{red!55}1 \\ 
        eth & emccobf & boguscf & \cellcolor{red!55}0.81 & 0.271 & 0.129 & \cellcolor{red!55}0.97 & \cellcolor{red!55}0.912 & 0.165 & \cellcolor{red!55}1 \\ 
        eth & emccobf & flatten & \cellcolor{red!55}0.801 & 0.358 & 0.057 & \cellcolor{red!55}0.967 & \cellcolor{red!55}0.886 & 0.099 & \cellcolor{red!55}1 \\ 
        eth & emccobf & split & \cellcolor{red!55}0.801 & 0.339 & 0.068 & \cellcolor{red!55}0.967 & \cellcolor{red!55}0.909 & 0.181 & \cellcolor{red!55}1 \\ 
        eth & emccobf & substitute & \cellcolor{red!55}0.957 & \cellcolor{red!55}0.919 & 0.203 & \cellcolor{red!55}0.971 & \cellcolor{red!55}0.879 & 0.209 & \cellcolor{red!55}1 \\ 
        wmp & emccobf & boguscf & 0.195 & 0.069 & 0.015 & 0.023 & \cellcolor{red!55}0.858 & 0.109 & \cellcolor{red!55}1 \\ 
        wmp & emccobf & flatten & 0.195 & 0.089 & 0.023 & 0.025 & \cellcolor{red!55}0.836 & 0.163 & \cellcolor{red!55}1 \\ 
        wmp & emccobf & split & 0.126 & 0.02 & 0.0 & 0.015 & 0.344 & 0.0 & 0 \\ 
        wmp & emccobf & substitute & 0.233 & 0.106 & 0.029 & 0.027 & \cellcolor{red!55}0.935 & 0.146 & \cellcolor{red!55}1 \\ 
        xmr & emccobf & boguscf & 0.23 & 0.383 & \cellcolor{red!55}0.815 & 0.203 & \cellcolor{red!15}0.615 & \cellcolor{red!55}0.878 & \cellcolor{red!55}1 \\ 
        xmr & emccobf & flatten & 0.032 & 0.129 & \cellcolor{red!55}0.817 & 0.06 & 0.04 & \cellcolor{red!55}0.948 & \cellcolor{red!55}1 \\ 
        xmr & emccobf & split & 0.027 & 0.092 & \cellcolor{red!55}0.714 & 0.042 & 0.04 & \cellcolor{red!55}1.0 & \cellcolor{red!55}1 \\ 
        xmr & emccobf & substitute & \cellcolor{red!55}0.961 & \cellcolor{red!55}0.718 & \cellcolor{red!55}0.81 & \cellcolor{red!55}0.909 & \cellcolor{red!55}0.889 & \cellcolor{red!55}0.876 & \cellcolor{red!55}1 \\ \hline \multicolumn{1}{l}{} \\ \hline

        \textbf{nonminer} & \textbf{obf} & \textbf{strat} & \textbf{btc} & \textbf{zny} & \textbf{cn} & \textbf{eth} & \textbf{wmp} & \textbf{xmr} & $\ge$\textbf{0.65} \\ \hline
        boa & - & - & 0.017 & 0.214 & 0.309 & 0.005 & 0.011 & 0.415 & 0 \\ 
        bullet & - & - & 0.082 & 0.162 & 0.021 & 0.023 & 0.449 & 0.169 & 0 \\ 
        chocolatekeen & - & - & 0.062 & 0.091 & 0.0 & 0.001 & 0.497 & 0.0 & 0 \\ 
        factorial & - & - & 0.34 & 0.447 & 0.378 & 0.034 & \cellcolor{red!15}0.639 & \cellcolor{red!15}0.533 & 0 \\ 
        ffmpeg & - & - & 0.366 & \cellcolor{red!15}0.558 & 0.382 & 0.069 & \cellcolor{red!15}0.573 & \cellcolor{red!15}0.569 & 0 \\ 
        figma & - & - & 0.016 & 0.192 & 0.398 & 0.001 & 0.009 & 0.369 & 0 \\ 
        filament & - & - & 0.033 & 0.143 & \cellcolor{red!15}0.526 & 0.005 & 0.014 & \cellcolor{red!15}0.597 & 0 \\ 
        funkykarts & - & - & 0.014 & 0.031 & 0.017 & 0.002 & 0.015 & 0.061 & 0 \\ 
        hydro & - & - & 0.138 & 0.377 & 0.373 & 0.033 & 0.412 & \cellcolor{red!15}0.539 & 0 \\ 
        imageconvolute & - & - & 0.007 & 0.013 & 0.011 & 0.0 & 0.021 & 0.073 & 0 \\ 
        jqkungfu & - & - & 0.005 & 0.013 & 0.007 & 0.0 & 0.02 & 0.065 & 0 \\ 
        jsc & - & - & 0.058 & 0.244 & 0.239 & 0.006 & 0.347 & 0.291 & 0 \\ 
        mandelbrot & - & - & 0.009 & 0.033 & 0.0 & 0.0 & 0.064 & 0.0 & 0 \\ 
        ogv-opus & - & - & 0.164 & 0.486 & 0.362 & 0.031 & 0.427 & \cellcolor{red!15}0.551 & 0 \\ 
        ogv-vp9 & - & - & 0.1 & 0.37 & 0.494 & 0.018 & 0.119 & \cellcolor{red!15}0.602 & 0 \\ 
        onnx & - & - & 0.005 & 0.014 & 0.008 & 0.0 & 0.013 & 0.065 & 0 \\ 
        pacalc & - & - & 0.048 & 0.271 & 0.385 & 0.014 & 0.019 & \cellcolor{red!15}0.525 & 0 \\ 
        parquet & - & - & 0.033 & 0.081 & 0.492 & 0.003 & 0.02 & \cellcolor{red!15}0.576 & 0 \\ 
        rfxgen & - & - & 0.025 & 0.225 & 0.411 & 0.001 & 0.018 & \cellcolor{red!15}0.604 & 0 \\ 
        rguiicons & - & - & 0.061 & 0.201 & 0.379 & 0.029 & 0.306 & \cellcolor{red!15}0.543 & 0 \\ 
        rguilayout & - & - & 0.001 & 0.009 & 0.017 & 0.0 & 0.008 & 0.095 & 0 \\ 
        rguistyler & - & - & 0.0 & 0.01 & 0.02 & 0.0 & 0.014 & 0.103 & 0 \\ 
        riconpacker & - & - & 0.037 & 0.297 & 0.369 & 0.01 & 0.015 & \cellcolor{red!15}0.553 & 0 \\ 
        rtexpacker & - & - & 0.059 & 0.113 & 0.019 & 0.004 & 0.473 & 0.093 & 0 \\ 
        rtexviewer & - & - & 0.061 & 0.115 & 0.013 & 0.005 & 0.463 & 0.099 & 0 \\ 
        sandspiel & - & - & 0.201 & 0.189 & 0.323 & 0.014 & 0.101 & \cellcolor{red!15}0.568 & 0 \\ 
        sqlgui & - & - & 0.115 & 0.425 & 0.377 & 0.028 & 0.449 & \cellcolor{red!15}0.525 & 0 \\ 
        sqlpractice & - & - & 0.031 & 0.107 & 0.359 & 0.004 & 0.01 & 0.352 & 0 \\ 
        wasm-astar & - & - & 0.319 & 0.433 & 0.371 & 0.115 & 0.467 & \cellcolor{red!15}0.549 & 0 \\ \hline
    \end{tabular}
    \label{table:nfisobf}
\end{table}

\afterpage{\clearpage}

\subsubsection{Effectiveness Against Obfuscation}

The first half of \autoref{table:baselineobf} shows the detection results of the six baselines in the obfuscated miner samples. Among these, Minesweeper achieved the highest sensitivity at $94.4\%$, followed by WASim naive Bayes, and MINOS. Although MINOS was able to detect all of our original samples as cryptominers, it struggles to identify the obfuscated samples, especially under function splitting and control flow flattening obfuscations.

The first half of \autoref{table:nfisobf} shows the n-FIS score of each obfuscated miner against the original miners. In the last column we present the {\ournameshort} detection results at $0.65$ score threshold for any match in the miner database. At this threshold, only $1$ obfuscated sample is missed, with a sensitivity of $96.6\%$. The similarity trends present in \autoref{table:minervsminer}a are also present in the obfuscated miner samples, with some exception. Of interest is the instruction substitution obfuscation performed by emcc-obf, which consistently raises the subgraph similarity scores for all the miner fingerprints, across all obfuscated samples. The graphs of CryptoNight with and without substitution obfuscation, shown in \autoref{figure:cn-ems} and \autoref{figure:cn} respectively, suggests that the variety of substituted instructions creates a complex enough graph to contain the behavior patterns of all other miners.

\subsubsection{False Positives}

Finally, we evaluate the performance of the detection methods against a sample of real-world WASM web applications to test for false positives. The results for the baselines are shown in the second half of \autoref{table:baselineobf}. Although WASim neural network, random forest, and support vector machine classifiers all have high specificity of $100\%$, $96.6\%$, and $89.7\%$, respectively, they also classify the majority of malicious samples as benign. Minesweeper on the other hand classifies almost everything as malicious, getting a $3.4\%$ specificity.

The second half of \autoref{table:nfisobf} shows the n-FIS scores and the results for the $0.65$ detection threshold for non-miners. Although we achieve a specificity of $100\%$ at this threshold, we observe that the similarity scores are high for the \verb|xmr| miner due to the simplicity of its data-flow graph. From this, it would be reasonable to apply different detection thresholds for different malicious fingerprint graphs based on their sizes or another simplicity metric.

\subsubsection{Results}

The summarized evaluation metrics for all the detection methods tested are shown in \autoref{table:summary}. {\ournameshort} was able to achieve an overall accuracy of $98.3\%$ at the $0.65$ detection threshold. The best result among the baselines is MINOS with an accuracy of $70.8\%$ and a $f_1$ score of $69.8\%$. WASim neural network, random forest, and support vector machine classifiers skew toward labeling most samples as non-miners, while Minesweeper and WASim naive Bayes classifier label most samples as miners.

\subsection{Discussion}

The result of this evaluation shows that our graph simplification algorithm in conjunction with the n-FIS score demonstrates the ability to differentiate cryptominers from other common types of Wasm web applications based on their data-flow graphs. Moreover, we demonstrate the ability to detect cryptominers under various obfuscations, outperforming the state-of-the-art MINOS.

\begin{table}[t]
    \centering
    \caption{Summary of performance metrics of all tested detection methods.}
    \scriptsize
    \begin{tabular}{|l|c|c|c|c|c|c|c|}
    \hline
        \textbf{Method} & \textbf{Accuracy} & \textbf{Sensitivity} & \textbf{Specificity} & \textbf{Precision} & \textbf{$f_1$-score} \\ \hline
        {\ournameshort} & \cellcolor{teal!15}98.3\% & \cellcolor{teal!15}96.7\% & \cellcolor{teal!15}100.0\% & \cellcolor{teal!15}100.0\% & \cellcolor{teal!15}98.3\% \\ 
        {\ournameshort} (no simplify) & 93.2\% & 93.3\% & 93.1\% & 93.3\% & 93.3\% \\ 
        MINOS & 70.8\% & 61.1\% & 82.8\% & 81.5\% & 69.8\% \\ 
        Minesweeper & 53.8\% & 94.4\% & 3.4\% & 54.8\% & 69.4\% \\ 
        WASim nn & 52.3\% & 16.7\% & 96.6\% & 85.7\% & 27.9\% \\ 
        WASim rf & 58.5\% & 33.3\% & 89.7\% & 80.0\% & 47.1\% \\ 
        WASim svm & 44.6\% & 0.0\% & \cellcolor{teal!15}100.0\% & N/A & 0.0\% \\ 
        WASim nb & 46.2\% & 55.6\% & 34.5\% & 51.3\% & 53.3\% \\ \hline
    \end{tabular}
    \label{table:summary}
\end{table}

\noindent\\
\textbf{RQ1.} In the first part of our evaluation, we showed that \autoref{algo:simplifyapprox} is able to reduce graphs of over 3000 edges and vertices to around 20 in most cases. This reduction significantly speeds up comparison operations between graphs even when employing highly scalable algorithms. Furthermore, the simplification of repetitive structural information preserves the local properties of computation and amplifies the distinction of cryptomining algorithms as evident in the detection metrics of \autoref{table:summary}.

\noindent\\
\textbf{RQ2.} The second part of our evaluation indicate that {\ournameshort} is highly effective in identifying cryptominers even in the precense of obfuscation through the use of the \textit{n-fragment inclusion score}. Although many obfuscations are able to generate substantially different graphs for the same miners, the subgraph similarity score remains high, showing that our local and fragmented view of subgraph similarity is tolerant to noise introduced by common program transformations. While the \verb|xmr| miner exhibits high inclusion scores among benign applications, this can be attributed to the simplicity of its graph and behavior, and we recommend deploying different thresholds based on the simplicity of a fingerprint in real world scenarios.

\noindent\\
\textbf{Performance.} Although performance was not the main objective of our study, \autoref{fig:timeabalation} shows that our analysis is scalable. There are several areas in which we could reduce the analysis overhead in real-world deployment. (1) The resilience of our detection method against fragmentation allows us to collect data-flow traces at random intervals, thus instrumentation overhead could be reduced significantly. (2) We developed our prototype in Python, which could be optimized by using a different language.

\noindent\\
\textbf{Scalability and Limitations.} The approximate graph simplification through backward random walks is designed to solve the scalability issue of analyzing large data-flow graphs, which we have shown to be effective on real world Wasm applications in the ablation study (\autoref{fig:timeabalation} and \autoref{table:reduction}). Since our simplification relies on repeated structures in the graph resulting from repeated computations, programs that perform many diverse types of computation may not simplify very well, but none of our real-world benchmark programs have this issue. Another limitation of our method arises due to its similarity to traditional signature checking, namely that a database of fingerprints needs to be maintained and updated. The scalability of our detection algorithm depends on the number of unique cryptomining algorithms. Fortunately, it has been shown in literature that cryptomining scripts and algorithms are low in diversity \cite{tekiner2021browser, 9092245, kharraz2019outguard}.

\section{Conclusion}


Given the limited diversity of cryptomining scripts in the wild, obfuscation serves as a natural and appealing solution to avoid detection. It is crucial to develop a cryptomining detection method that is resilient to code obfuscation. In this paper, we propose using instruction-level data-flow graphs as a valuable source of information on a program's computational behavior. We present: (1) a graph simplification algorithm to reduce the computational burden of processing large and granular data-flow graphs while preserving local substructures; and (2) a subgraph similarity measure, the \textit{n-fragment inclusion score}, based on fragment inclusion that is robust against noise and obfuscation. Our experimental results demonstrate that the simplified graph fingerprints retain essential structural information that distinguishes proof-of-work algorithms, and the n-fragment inclusion score effectively quantifies this structural difference. The combined framework {\ournameshort} achieved high accuracy against standard obfuscation, outperforming existing detection methods.


\bibliography{lipics-v2021-sample-article}

\newpage

\appendix

\section{Unsimplified Graph Examples}\label{sec:appendixunsimpex}

\begin{figure}[ht]
\captionsetup[subfigure]{justification=Centering}
\begin{subfigure}[t]{0.22\textwidth}
    \includegraphics[width=\textwidth]{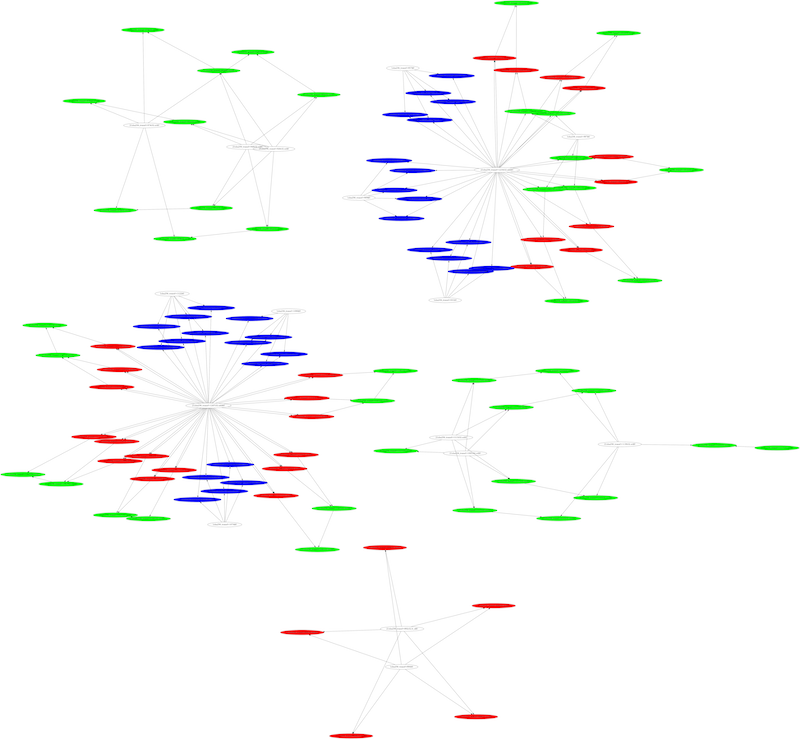}
    \caption{btc}
    \label{figure:btc0}
\end{subfigure}\hspace{\fill}
\begin{subfigure}[t]{0.22\textwidth}
    \includegraphics[width=\linewidth]{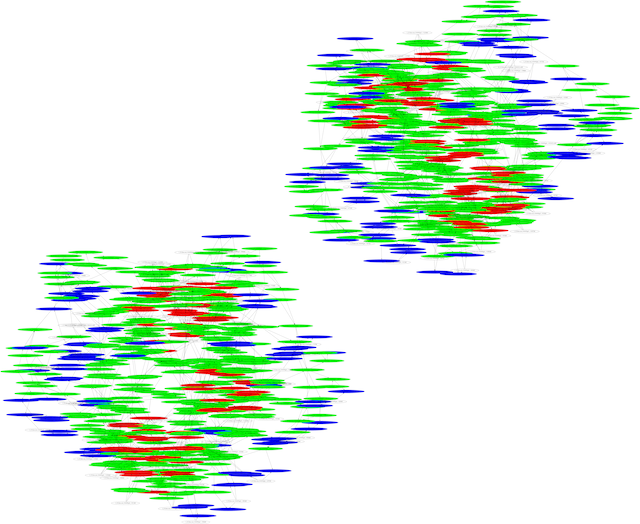}
    \caption{eth}
    \label{figure:eth0}
\end{subfigure}\hspace{\fill}
\begin{subfigure}[t]{0.22\textwidth}
    \includegraphics[width=\linewidth]{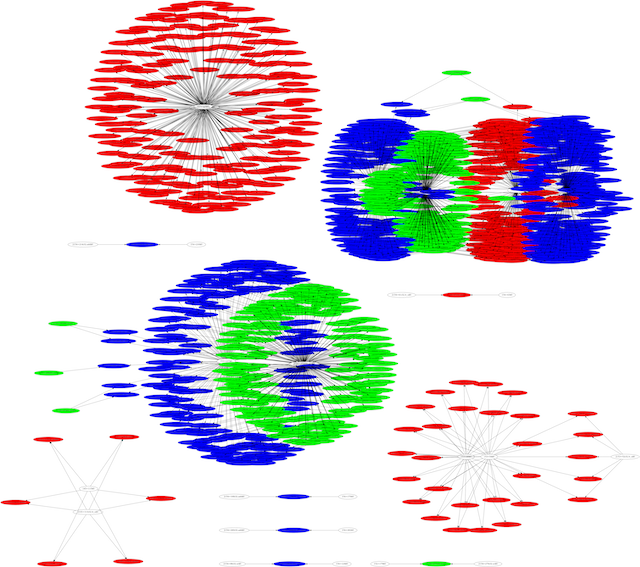}
    \caption{zny}
    \label{figure:zny0}
\end{subfigure}
\bigskip
\\
\begin{subfigure}[t]{0.22\textwidth}
    \includegraphics[width=\linewidth]{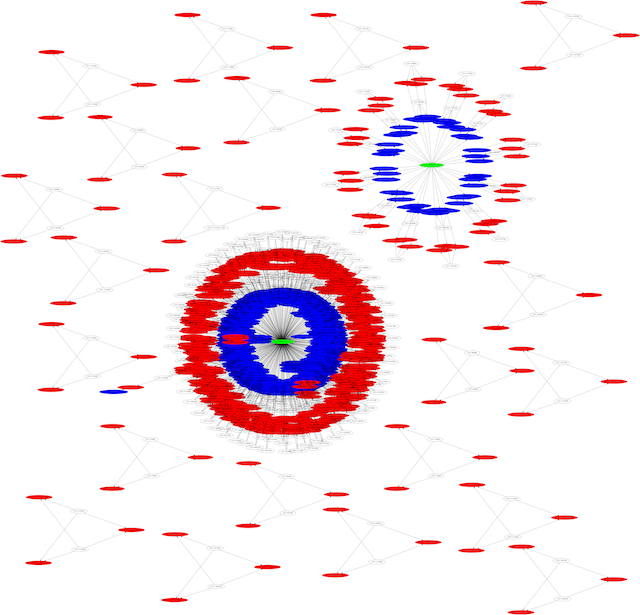}
    \caption{cn}
    \label{figure:cn0}
\end{subfigure}\hspace{\fill}
\begin{subfigure}[t]{0.22\textwidth}
    \includegraphics[width=\linewidth]{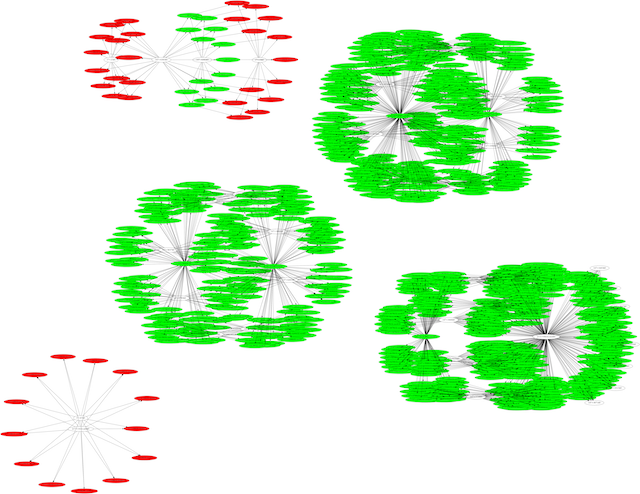}
    \caption{wmp}
    \label{figure:wmp0}
\end{subfigure}\hspace{\fill}
\begin{subfigure}[t]{0.22\textwidth}
    \includegraphics[width=\linewidth]{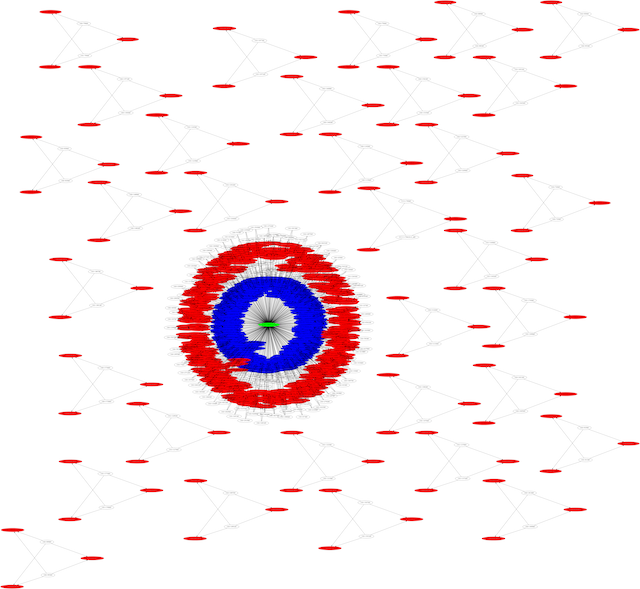}
    \caption{xmr}
    \label{figure:xmr0}
\end{subfigure}
\bigskip
\\
\begin{subfigure}[t]{0.22\textwidth}
    \includegraphics[width=\linewidth]{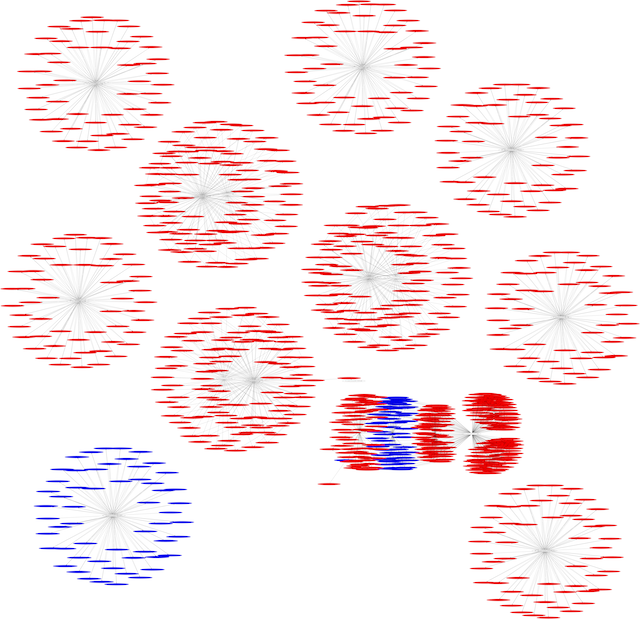}
    \caption{boa}
    \label{figure:boa0}
\end{subfigure}\hspace{\fill}
\begin{subfigure}[t]{0.22\textwidth}
    \includegraphics[width=\linewidth]{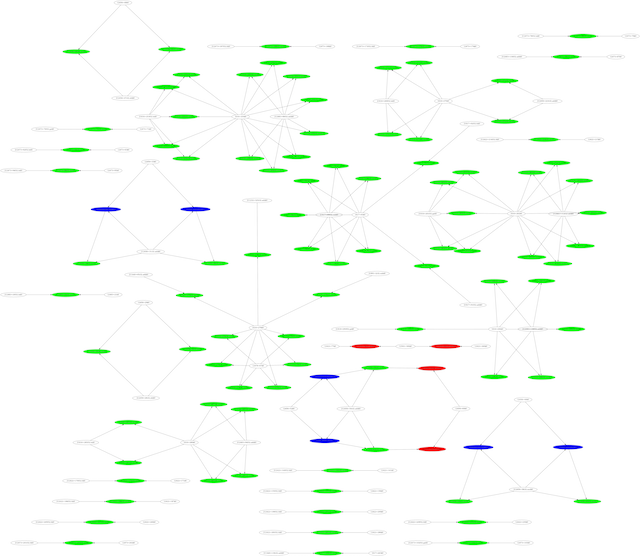}
    \caption{bullet}
    \label{figure:bullet0}
\end{subfigure}\hspace{\fill}
\begin{subfigure}[t]{0.22\textwidth}
    \includegraphics[width=\linewidth]{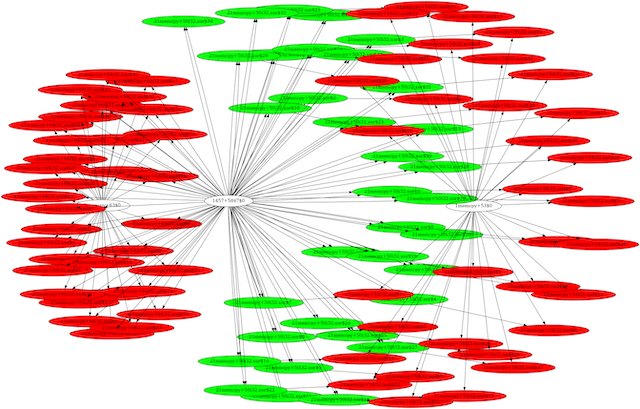}
    \caption{chocolatekeen}
    \label{figure:chocolatekeen0}
\end{subfigure}
\bigskip
\\
\begin{subfigure}[t]{0.22\textwidth}
    \includegraphics[width=\linewidth]{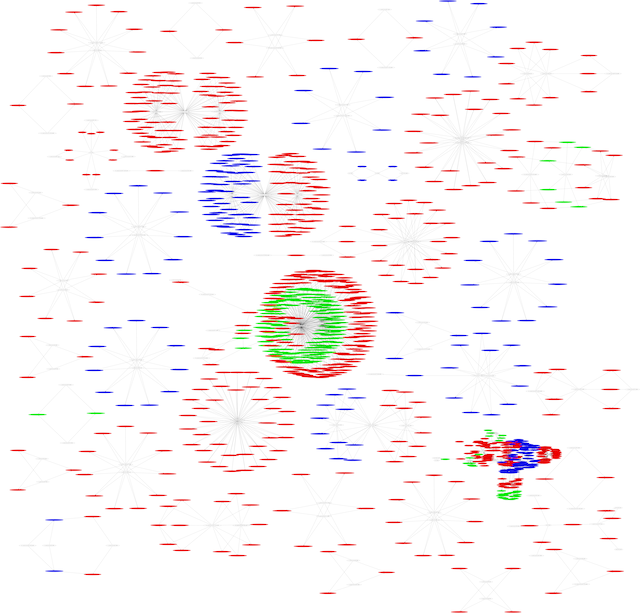}
    \caption{ffmpeg}
    \label{figure:ffmpeg0}
\end{subfigure}\hspace{\fill}
\begin{subfigure}[t]{0.22\textwidth}
    \includegraphics[width=\linewidth]{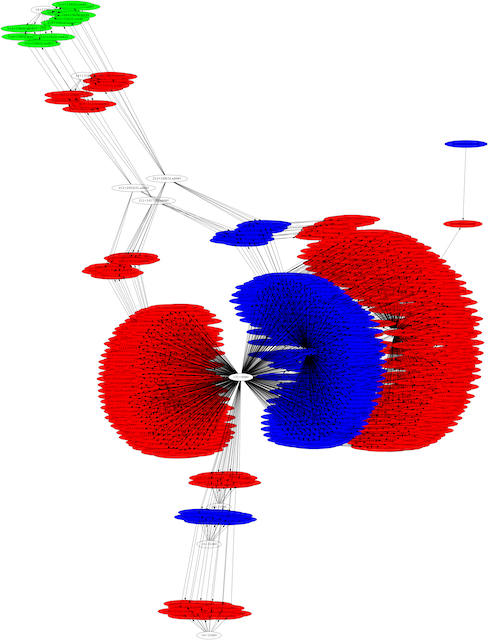}
    \caption{sandspiel}
    \label{figure:sandspiel0}
\end{subfigure}\hspace{\fill}
\begin{subfigure}[t]{0.22\textwidth}
    \includegraphics[width=\linewidth]{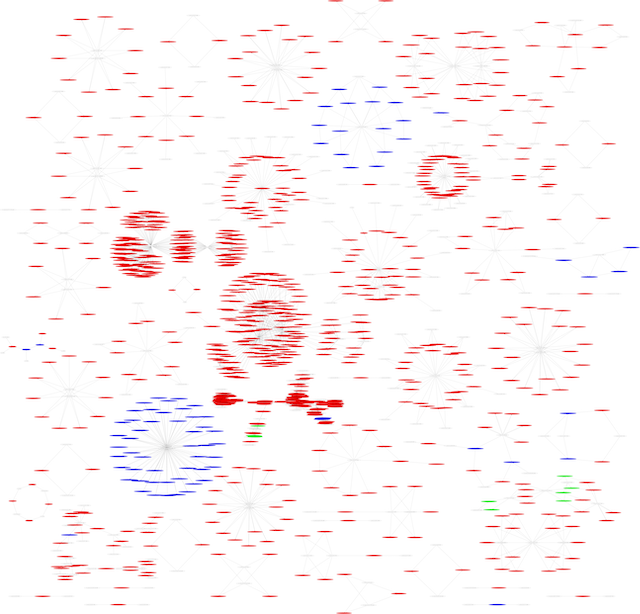}
    \caption{sqlgui}
    \label{figure:sqlgui0}
\end{subfigure}
\caption{\textbf{(Example Original Graphs of Miners and Non-miners)}}
\label{figure:originalgraphsexample}
\end{figure}

\end{document}